\newtheorem{assumption}{Assumption}
\newtheorem{lemma}{Lemma}
\newtheorem{remark}{Remark}
\newtheorem{definition}{Definition}
\newtheorem{theorem}{Theorem}
\begin{document}

\title{Privacy-Preserving Distributed Defense Framework for DC Microgrid Against Exponentially Unbounded False Data Injection Attacks}

\author{Yi Zhang,~\IEEEmembership{Student Member,~IEEE,} Mohamadamin Rajabinezhad,~\IEEEmembership{Student Member,~IEEE,} Yichao Wang, 

Junbo Zhao,~\IEEEmembership{Senior Member,~IEEE,}, Shan Zuo,~\IEEEmembership{Member,~IEEE,}
\thanks{Yi Zhang,  Mohamadamin Rajabinezhad, Yichao Wang, Junobo Zhao, and Shan Zuo are with the Department of Electrical and Computer Engineering, University of Connecticut, CT 06269, USA.; (E-mails: yi.2.zhang@uconn.edu;mohamadamin.rajabinezhad@uconn.edu;yichao.wang-
@uconn.edu;junbo@uconn.edu;shan.zuo@uconn.edu)}
}
\markboth{Journal of \LaTeX\ Class Files,~Vol.~14, No.~8, August~2015}%
{Shell \MakeLowercase{\textit{et al.}}: Bare Demo of IEEEtran.cls for IEEE Journals}

\maketitle

\begin{abstract}
This paper introduces a novel, fully distributed control framework for DC microgrid (MG), enhancing resilience against exponentially unbounded false data injection (EU-FDI) attacks. Our framework features a consensus-based secondary control for each converter, effectively addressing these advanced threats. To further safeguard sensitive operational data, a privacy-preserving mechanism is incorporated into the control design, ensuring that critical information remains secure even under adversarial conditions. Rigorous Lyapunov stability analysis confirms the framework's ability to maintain critical DC MG operations like voltage regulation and load sharing under EU-FDI threats. The framework's practicality is validated through hardware-in-the-loop experiments, demonstrating its enhanced resilience and robust privacy protection against the complex challenges posed by quick variant FDI attacks.
\end{abstract}

\begin{IEEEkeywords}
Attack-resilient control, DC microgrid, distributed control, exponentially unbounded attack, FDI attack.
\end{IEEEkeywords}

%
\IEEEpeerreviewmaketitle

\section{INTRODUCTION}
In recent years, DC microgid (MG) have gained prominence over AC MG due to their compatibility with the direct current characteristics of emerging distributed energy resources (DERs), storage units, and controllable loads \cite{1}. DC MG streamline control by eliminating issues like frequency regulation, reactive power, and harmonics found in AC systems \cite{am5,am20}. Key control objectives include voltage regulation to maintain constant average voltage and proportional load sharing based on converter power ratings \cite{am38,9}. A hierarchical control scheme—comprising primary, secondary, and tertiary levels—efficiently addresses these objectives. The primary level employs droop-based methods for rapid voltage and current regulation, while the secondary level compensates voltage deviations caused by primary control \cite{am21,am15}. Distributed secondary control, leveraging local controllers and information, offers a scalable, reliable alternative to centralized approaches, minimizing communication complexity \cite{7,yang2022pi, sadabadi2022robust}.

DC MG face heightened vulnerability to cyberattacks due to their reliance on distributed control frameworks. Cooperative control, involving information exchange among converters, local sensing, and decentralized droop mechanisms, increases exposure to networked control threats over sparse communication networks \cite{liu2011false,kosut2011malicious,liang2016review}. Moreover, safeguarding operational data privacy is critical in power markets, where voltage and load information can be exploited to predict energy pricing or manipulate the market. Unauthorized access to such data may enable adversaries to gain unfair financial advantages and disrupt market stability. These risks highlight the need for privacy-preserving mechanisms to secure sensitive information while maintaining system performance.

To counter cyber threats, various attack-detection methods and resilient control protocols have been developed for DC MG \cite{26,28,37}. Recent advancements focus on distributed control strategies to address unbounded cyberattacks \cite{lu2022generalized,am29,zuo2020resilient,leng2022projections,zhou2023distributed,zhou2022resilient,jiang2021high,liu2023resilient,wang2022discrete}. Input-output feedback linearization enhances secondary voltage control by linking output dynamics to control inputs, but uncontrolled input changes can destabilize the system, particularly under aggressive time-varying attacks that exploit saturation constraints \cite{leng2022projections}. Methods like the generalized extended state observer-based control in \cite{lu2022generalized} estimate unbounded FDI attacks, assuming the attack signal can be detected and its derivative is bounded.

Recent approaches to address privacy concerns in DC MG include optimized energy sharing, state decomposition with noise masking, aggregated data strategies, limited data exchange, and decomposed scheduling methods \cite{huang2016shepherd,yuan2023distributed,hussain2016resilient,zhou2020privacy,albaker2018privacy}. \cite{huang2016shepherd} preserves privacy in hybrid AC-DC MG through the Shepherd framework, which hides power consumption via optimized energy sharing and reduced neighbor transmission. \cite{yuan2023distributed} ensures privacy in islanded DC MG using state decomposition to separate control signals and random noise masking to protect transmitted data. \cite{hussain2016resilient} enhances privacy in networked MG with a nested energy management strategy that aggregates surplus and deficit power data to minimize customer data exposure. \cite{zhou2020privacy} achieves privacy in islanded reconfigurable MG by using a distributed control strategy that shares only frequency data while keeping generation data private. \cite{albaker2018privacy} protects privacy in integrated MG through a Lagrangian relaxation method, decomposing scheduling to safeguard MG data and reduce operation costs.

This paper addresses the challenges of exponentially unbounded false data injection (EU-FDI) attacks in DC MG by introducing a novel distributed secondary control framework. Using adaptive control, the framework ensures uniformly ultimately bounded (UUB) stability, maintaining frequency and voltage regulation under fast-growing attack scenarios. Unlike existing methods, it tackles the complex dynamics of EU-FDI attacks while integrating privacy-preserving mechanisms to protect sensitive data, such as voltages and error variables. This dual-focus approach combines robust defense with stringent privacy, offering a comprehensive solution for modern DC MG. The key contributions are summarized as follows.
\begin{itemize}
    \item[1)] A time-varying coupling gain is introduced and dynamically adjusted through an adaptive control law. Additionally, an adaptive damping mechanism is proposed to balance response speed and stability. These mechanisms enable the system to counteract the rapid variations of EU-FDI attacks while achieving robust stability performance. 
    \item[2)] 
Privacy-preserving mechanisms are integrated into the distributed attack-resilient secondary control framework. A novel output mask is introduced to protect agents' initial states, avoiding the use of random noise and ensuring that all agents converge exactly to the average value of their initial states, rather than its mean square value. This approach achieves privacy preservation while ensuring attack resilience.
    \item[3)]
A rigorous Lyapunov stability analysis is presented, providing theoretical guarantees for the framework's resilience and demonstrating uniformly ultimately bounded (UUB) convergence of the system. The proposed approach is validated through hardware-in-the-loop (HIL) experiments, highlighting its effectiveness in ensuring stable operation under varying load conditions, communication failures, and EU-FDI attacks.
\end{itemize}

The paper is organized as follows: Section \ref{sec:Standard Cooperative Secondary Control} proposes the standard cooperative secondary control for DC MG. Section \ref{sec:Attack-Resilient Privacy-Preserving Control} formulates attack-resilient control problems for bounded and unbounded attacks and presents a fully distributed solution. Section \ref{sec:Hardware-in-the-Loop Validation} validates the results through hardware-in-the-loop (HIL) experiments, and Section \ref{sec:Conclusion} concludes the paper.

\noindent\textit{Notations:}
In this paper, ${{\mathbf{1}}_N} \in {\mathbb{R}^N}$ represents a vector in which every entry is one. The notation $\left|\cdot \right|$ denotes the absolute value of a real number. Additionally, $\operatorname{diag} \left\{  \cdot  \right\}$ is used to form a diagonal matrix from a given set of elements. A physical, islanded DC MG system is modeled as a communication digraph $\mathscr{G}$. This DC MG system comprises $N$ converters. The interactions among local converters are represented by the graph ${\mathscr{G}} = (\mathscr{W}, \mathscr{E}, \mathcal{A})$, where $\mathscr{W}=\{0,1,2, \ldots, N\}$ is the set of vertices, $0$ denotes the leader, and $1\cdots N $ denote the followers. $\mathscr{E} \subset \mathscr{W} \times \mathscr{W}$ is the set of edges, and $\mathcal{A} = [{a_{ij}}]$ is the adjacency matrix. An edge in the graph, representing the information flow from converter $j$ to converter $i$, is denoted by $({w_j}, {w_i})$ and is assigned a weight of ${a_{ij}}$. If $({w_j}, {w_i}) \in \mathscr{E}$, then ${a_{ij}} > 0$; otherwise, ${a_{ij}} = 0$. A node $j$ is considered a neighbor of node $i$ if $({w_j}, {w_i}) \in \mathscr{E}$. The set of neighbors for node $i$ is defined as ${\mathcal{N}_i} = \left\{j \mid ({w_j}, {w_i}) \in \mathscr{E}\right\}$. The system includes one leader node, while the other nodes are followers, which may be subject to FDI attacks. The leader node disseminates the reference value to those converters that can receive its information. Each converter directly receives relative information from linked converters in a sparse communication digraph. The in-degree matrix $\mathcal{D} = \operatorname{diag}({d_i}) \in {\mathbb{R}^{N \times N}}$, is defined with ${d_i} = \sum\nolimits_{j \in {\mathcal{N}_i}} {{a_{ij}}}$, representing the sum of the weights of the edges incident to node $i$. The Laplacian matrix of the graph is given by $\mathcal{L} = \mathcal{D} - \mathcal{A}$. We assume that ${\mathscr{G}}$ is bidirectional, implying ${a_{ij}} = {a_{ji}}$, which makes $\mathcal{L}$ not only semi-positive but also symmetric. The pinning gain $g_i$ represents the influence from the leader to converter $i$: $g_i > 0$ if there is a link from the leader node to node $i$, and $g_i = 0$ otherwise. The pinning gain matrix is defined as ${\mathcal{G}} = \operatorname{diag}(g_i)$.

\section{Standard Cooperative Secondary Control}
\label{sec:Standard Cooperative Secondary Control}

In the standard secondary control, each converter transmits $\mathrm{X}_i = [\bar{V}_i, R_i^{\text{vir}} I_i]$ to its neighbors via a communication graph, where $\bar{V}_i$ is the estimated average voltage, $I_i$ is the output current, and $R_i^{\text{vir}}$ is a virtual impedance set as $R_i^{\text{vir}} = k / I_i^{\text{rated}}$. Since $R_i^{\text{vir}} I_i = k I_i / I_i^{\text{rated}}$, achieving proportional load sharing reduces to achieving consensus on $R_i^{\text{vir}} I_i$.

Based on \cite{zuo2020distributed}, the standard cooperative secondary control for DC MG transforms the problem into consensus control for first-order linear MAS. Its primary objectives are to regulate the average voltage to a global reference and ensure proportional load sharing. Using relative information from neighboring converters, the secondary control adjusts the local voltage setpoint $V_i^{}$. At the primary level, the droop mechanism employs a virtual impedance $R_i^{\text{vir}}$ to model the converter’s output impedance. Cooperative secondary control fine-tunes $V_i^{}$ and mitigates voltage and current residuals. The local voltage setpoint is given as
\begin{equation}
\label{eq1}
V_i^{*}=V_{n_i}+V_{\mathrm{ref}}-R_i^{\mathrm{vir}} I_i
\end{equation}
where $V_{n_i}$ is the reference for the primary control level and is selected at the secondary control level.

To achieve global voltage regulation and proportional load sharing, the secondary control locally provides $V_{n_i}$ for each converter through data exchange with neighbors. By applying input-output feedback linearization, the voltage droop mechanism in \eqref{eq1} is differentiated to obtain
\begin{equation}
\label{eq2}
\dot{V}_{n_i}=\dot{V}_i^{*}+R_i^{\mathrm{vir}} \dot{i}_i=u_i,
\end{equation}
where $u_i$ is the control input, and \eqref{eq2} computes the primary control reference $V_{n_i}$ from $u_i$. The secondary control is reformulated as a consensus problem for first-order linear MAS. To achieve global voltage regulation and proportional load sharing, the cooperative secondary control law for each converter, based on relative information from neighbors, is given by
\begin{equation}
\label{eq3}
u_i=c_i\left(g_i\left(V_{\mathrm{ref}}-\bar{V}_i\right)+\sum_{j \in \mathcal{N}_i} a_{i j}\left(R_{j}^{\mathrm{vir}} I_{j}-R_i^{\mathrm{vir}} I_i\right)\right),
\end{equation}
where $c_i \in \mathbb{R}>0$ is the coupling gain. $\bar{V}_i$ is the estimate of the global average voltage value at converter $i$ and is given by
\begin{equation}
\label{eq4}
\dot{\bar{V}}_i=\dot{V}_i+c_i \sum_{j \in \mathcal{N}_i} a_{i j}\left(\bar{V}_{j}-\bar{V}_i\right)
\end{equation}
where $V_i$ is the local measured voltage. The secondary control setpoint for the primary control, $V_{n_i}$, is then computed from $u_i$ as
\begin{equation}
\label{eq5}
V_{n_i}=\int u_i \mathrm{~d} t
\end{equation}

Assume that the converter produces the demanded voltage, i.e., $V_i^{*}=V_i$. Combining \eqref{eq2}, \eqref{eq3}, and \eqref{eq4} yields
\begin{equation}
\label{eq6}
\begin{aligned}
\dot{\bar{V}}_i+R_i^{\mathrm{vir}} \dot{i}_i=c_i & \left(\sum_{j \in \mathcal{N}_i} a_{i j}\left(\bar{V}_{j}-\bar{V}_i\right)+g_i\left(V_{\text {ref }}-\bar{V}_i\right)\right. \\
& \left.+\sum_{j \in \mathcal{N}_i} a_{i j}\left(R_{j}^{\mathrm{vir}} I_{j}-R_i^{\mathrm{vir}} I_i\right)\right),
\end{aligned}
\end{equation}
which are further formulated as
\begin{flalign}
\label{eq7}
\begin{aligned}
\dot{\bar{V}}_i+R_i^{\mathrm{vir}} \dot{i}_i=&c_i\Bigg( \sum_{j \in \mathcal{N}_i} a_{i j}\left(\left(\bar{V}_{j}+R_{j}^{\mathrm{vir}} I_{j}\right)-\left(\bar{V}_i+R_i^{\mathrm{vir}} I_i\right)\right) \\
& +g_i\left(\left(V_{\mathrm{ref}}+R_i^{\mathrm{vir}} I_i\right)-\left(\bar{V}_i+R_i^{\mathrm{vir}} I_i\right)\right)\Bigg)
\end{aligned}&&\raisetag{2\baselineskip}
\end{flalign}

Reference \cite{nasirian2014distributed} gives the detailed steady-state analysis to show that the cooperative secondary control achieves both voltage and current regulation objectives, due to the relationship between the supplied currents and the bus voltage through the MG admittance matrix. Similarly, we obtain that, in the steady state, $R_i^{\text {vir }} I_i$ converges to a certain constant value $k I_{s s}^{\mathrm{pu}}$. Denote $\Theta_i=\bar{V}_i+R_i^{\mathrm{vir}} I_i$ and $\Theta_{\mathrm{ref}}=V_{\mathrm{ref}}+k I_{s s}^{\mathrm{pu}}$. Then,
\begin{equation}
\label{eq8}
\begin{aligned}
\dot{\Theta}_i & =c_i\left(\sum_{j \in \mathcal{N}_i} a_{i j}\left(\Theta_{j}-\Theta_i\right)+g_i\left(\Theta_{\mathrm{ref}}-\Theta_i\right)\right) \\
& =c_i\left(-\left(d_i+g_i\right) \Theta_i+\sum_{j \in \mathcal{N}_i} a_{i j} \Theta_{j}+g_i \Theta_{\mathrm{ref}}\right) .
\end{aligned}
\end{equation}

The global form of \eqref{eq8} is
\begin{equation}
\label{eq9}
\dot{\Theta}=-\operatorname{diag}\left(c_i\right)(\mathcal{L}+\mathcal{G})\left(\Theta-\mathbf{1}_{N} \Theta_{\mathrm{ref}}\right),
\end{equation}
where $\Theta=\left[\Theta_{1}^{\mathrm{T}}, \ldots, \Theta_{N}^{\mathrm{T}}\right]^{\mathrm{T}}$.

Define the following global cooperative regulation error
\begin{equation}
\label{eq10}
\varepsilon=\Theta-\mathbf{1}_{N} \Theta_{\mathrm{ref}}
\end{equation}
where $\varepsilon=\left[\varepsilon_{1}^{\mathrm{T}}, \ldots, \varepsilon_{N}^{\mathrm{T}}\right]^{\mathrm{T}}$. The following assumption is needed for the communication network.
\begin{assumption}
\label{ass:1}
The digraph $\mathscr{G}$ includes a spanning tree, where the leader node is the root. \end{assumption}
\begin{lemma}[\cite{fax2004information}]
\label{lem:1}
Given Assumption 1, $(\mathcal{L}+\mathcal{G})$ is nonsingular and positive-definite.    
\end{lemma}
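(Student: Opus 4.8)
The plan is to exploit the symmetric structure that the bidirectionality assumption confers on the problem and to prove positive-definiteness directly via a quadratic-form (energy) argument, from which nonsingularity follows immediately. First I would observe that since $\mathcal{L}$ is symmetric (because $a_{ij}=a_{ji}$) and $\mathcal{G}=\operatorname{diag}(g_i)$ is diagonal with $g_i \geq 0$, the matrix $M := \mathcal{L}+\mathcal{G}$ is symmetric; hence it suffices to show $x^{\mathrm{T}} M x > 0$ for every nonzero $x \in \mathbb{R}^N$, and positive-definiteness will then give nonsingularity at once.

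The key step is the decomposition of the quadratic form into two manifestly nonnegative pieces. For the undirected Laplacian one has $x^{\mathrm{T}} \mathcal{L} x = \tfrac{1}{2}\sum_{i}\sum_{j \in \mathcal{N}_i} a_{ij}(x_i - x_j)^2$, while $x^{\mathrm{T}}\mathcal{G} x = \sum_i g_i x_i^2$. Therefore $x^{\mathrm{T}} M x = \tfrac{1}{2}\sum_{i}\sum_{j \in \mathcal{N}_i} a_{ij}(x_i - x_j)^2 + \sum_i g_i x_i^2 \geq 0$, so $M$ is at least positive semi-definite; the remaining work is to rule out nonzero vectors on which this expression vanishes.

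Then I would run the null-space argument using Assumption~\ref{ass:1}. Suppose $x^{\mathrm{T}} M x = 0$. Since both summands are nonnegative, each must vanish separately: the first forces $x_i = x_j$ whenever $a_{ij} > 0$, so $x$ is constant across the connected follower graph, and the second forces $x_i = 0$ at every pinned node (those with $g_i > 0$). Assumption~\ref{ass:1} guarantees a spanning tree rooted at the leader, which in the bidirectional setting means the follower graph is connected and at least one follower is pinned; propagating the value zero from a pinned node along the connected graph then yields $x = \mathbf{0}$. This shows $x^{\mathrm{T}} M x > 0$ for all $x \neq \mathbf{0}$, i.e.\ $M$ is positive-definite and thus nonsingular.

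I expect the only real subtlety --- rather than a genuine obstacle --- to lie in translating the directed-graph hypothesis (a spanning tree with the leader as root) into the connectivity-plus-pinning facts that the energy argument consumes; once bidirectionality is invoked this becomes routine. An alternative route that avoids the symmetric structure would treat $M$ as an irreducibly diagonally dominant nonsingular $M$-matrix via Gershgorin's disc theorem, but given that $\mathcal{L}$ is already symmetric here I would favor the cleaner energy argument outlined above.
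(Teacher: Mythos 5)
The paper never proves this lemma at all --- it is stated with a citation to \cite{fax2004information} --- so your energy argument is a self-contained substitute rather than a variant of anything in the text. For the bidirectional setting the paper assumes it is the natural route, and the decomposition $x^{\mathrm{T}}(\mathcal{L}+\mathcal{G})x=\tfrac{1}{2}\sum_i\sum_{j\in\mathcal{N}_i}a_{ij}(x_i-x_j)^2+\sum_i g_i x_i^2$ is correct. (The cited reference, and most of the literature, handles general digraphs, where $\mathcal{L}+\mathcal{G}$ need not be symmetric, via spectral and M-matrix arguments closer to the Gershgorin alternative you mention; your symmetric argument is more elementary but narrower, which is fine here since the paper explicitly assumes $a_{ij}=a_{ji}$.)

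There is, however, one incorrect step, and it sits exactly where you predicted the subtlety would be: the translation of Assumption~\ref{ass:1} into connectivity-plus-pinning. A spanning tree rooted at the leader does \emph{not} imply that the follower graph is connected. The leader's outgoing (pinning) edges count as tree edges, so the correct translation is: \emph{every connected component} of the undirected follower graph contains at least one pinned node ($g_i>0$), since otherwise the nodes of a pin-free component would be unreachable from the root. A concrete counterexample to your claim: followers $\{1,2\}$ and $\{3,4\}$ each joined by an edge, no edges between the pairs, leader pinning nodes $1$ and $3$; Assumption~\ref{ass:1} holds, yet the follower graph is disconnected, and propagating zero ``from a pinned node along the connected graph'' reaches only half the nodes. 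The repair is routine: run the null-space argument per component --- on each component the Laplacian term forces $x$ to be constant, and the pinning term forces that constant to be zero at the component's pinned node --- and conclude $x=\mathbf{0}$ componentwise. With that modification your proof is complete; as written, the propagation step fails whenever the follower graph has more than one component.
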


\begin{lemma}[\cite{zuo2020distributed}]
\label{lem:1}
Given Assumption 1, by designing the auxiliary control input as \eqref{eq3} and \eqref{eq4}, the global voltage regulation and proportional load sharing are both achieved.    
\end{lemma}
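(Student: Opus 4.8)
The plan is to collapse the statement to an exponential stability result for the global error dynamics and then read off the two physical objectives from convergence of that error. First I would note that $\Theta_{\mathrm{ref}}$ is constant, so differentiating the error definition \eqref{eq10} and substituting the closed-loop form \eqref{eq9} produces the autonomous linear system
\begin{equation}
\label{eq:errdyn}
\dot{\varepsilon} = -\operatorname{diag}(c_i)\,(\mathcal{L}+\mathcal{G})\,\varepsilon .
\end{equation}
The entire claim then reduces to proving $\varepsilon(t)\to 0$, since $\varepsilon\to 0$ is precisely $\Theta\to\mathbf{1}_N\Theta_{\mathrm{ref}}$.

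For the convergence I would deliberately avoid the naive candidate $\tfrac{1}{2}\varepsilon^{\mathrm{T}}\varepsilon$, because $\operatorname{diag}(c_i)(\mathcal{L}+\mathcal{G})$ is a product of two positive-definite matrices and is in general \emph{not} symmetric, so its symmetric part need not be sign-definite. Instead I would take the weighted Lyapunov function
\begin{equation}
\label{eq:lyap}
V = \tfrac{1}{2}\,\varepsilon^{\mathrm{T}}\operatorname{diag}(c_i)^{-1}\varepsilon ,
\end{equation}
which is positive definite since every $c_i>0$. Differentiating along \eqref{eq:errdyn}, the diagonal gain cancels exactly, leaving
\begin{equation}
\dot{V} = -\varepsilon^{\mathrm{T}}(\mathcal{L}+\mathcal{G})\varepsilon \le -\lambda_{\min}(\mathcal{L}+\mathcal{G})\,\|\varepsilon\|^{2}.
\end{equation}
By Lemma \ref{lem:1}, $(\mathcal{L}+\mathcal{G})$ is positive definite under Assumption \ref{ass:1} (and symmetric, as $\mathscr{G}$ is bidirectional), so $\lambda_{\min}(\mathcal{L}+\mathcal{G})>0$ and $\dot{V}<0$ for every $\varepsilon\neq 0$; standard Lyapunov theory then yields exponential convergence $\varepsilon\to 0$.

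The remaining and most delicate step is to translate $\Theta_i\to\Theta_{\mathrm{ref}}$ into the two stated objectives, and this is where I expect the main obstacle to lie. Writing $\Theta_i=\bar V_i+R_i^{\mathrm{vir}}I_i$ and $\Theta_{\mathrm{ref}}=V_{\mathrm{ref}}+kI_{ss}^{\mathrm{pu}}$, convergence of the composite variable alone does not by itself separate $\bar V_i$ from $R_i^{\mathrm{vir}}I_i$. I would close this gap using the steady-state coupling between the supplied currents and the bus voltages through the MG admittance matrix (as in \cite{nasirian2014distributed}), which forces $R_i^{\mathrm{vir}}I_i$ to settle to the common constant $kI_{ss}^{\mathrm{pu}}$ for all $i$. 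Granting that, $\Theta_i\to\Theta_{\mathrm{ref}}$ gives $\bar V_i\to V_{\mathrm{ref}}$ for every converter, which is global voltage regulation, while the common limit of $R_i^{\mathrm{vir}}I_i=kI_i/I_i^{\mathrm{rated}}$ across all converters is exactly proportional load sharing, completing the argument.
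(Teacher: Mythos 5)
Your proposal is correct and follows essentially the same route the paper itself takes: reduce the closed loop to the linear error dynamics $\dot{\varepsilon} = -\operatorname{diag}(c_i)(\mathcal{L}+\mathcal{G})\varepsilon$, use the positive definiteness of $\mathcal{L}+\mathcal{G}$ under Assumption \ref{ass:1} (the paper's Lemma~1) to conclude convergence of $\varepsilon$ to zero, and then appeal to the steady-state admittance-matrix analysis of \cite{nasirian2014distributed} to split $\Theta_i \to \Theta_{\mathrm{ref}}$ into voltage regulation ($\bar{V}_i \to V_{\mathrm{ref}}$) and proportional load sharing ($R_i^{\mathrm{vir}}I_i \to k I_{ss}^{\mathrm{pu}}$). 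The only difference is one of completeness, not of method: the paper delegates the convergence step to the cited reference \cite{zuo2020distributed}, whereas your weighted Lyapunov function $\tfrac{1}{2}\varepsilon^{\mathrm{T}}\operatorname{diag}(c_i)^{-1}\varepsilon$, chosen precisely so the nonsymmetric product $\operatorname{diag}(c_i)(\mathcal{L}+\mathcal{G})$ cancels to the symmetric form $-\varepsilon^{\mathrm{T}}(\mathcal{L}+\mathcal{G})\varepsilon$, supplies that step explicitly and is a sound way to do so given the bidirectional (hence symmetric) graph assumed in the paper.
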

\begin{remark}
\label{rem:1}
Using input-output feedback linearization, the secondary control problem of DC MG is transformed into a tracking synchronization problem for first-order linear MAS. By applying standard cooperative control protocols, we proposed the secondary control protocols \eqref{eq3} and \eqref{eq4} to achieve global voltage regulation and proportional load sharing, offering a faster dynamic response than the double PI controllers in \cite{nasirian2014distributed}.
\end{remark}

\section{Attack-Resilient Privacy-Preserving Control}
\label{sec:Attack-Resilient Privacy-Preserving Control}

In this section, the attack-resilient privacy-preserving control problem for DC MG under EU-FDI attacks is formulated. Then, an attack-resilient privacy-preserving control framework to
address the problem is developed. Rigorous proofs based on Lyapunov
techniques show that UUB convergence is achieved
for voltage and current regulations against EU-FDI attacks.

\subsection{Problem Formulation}

Malicious attackers may inject EU-FDI attack signals to the local control input channel of each converter. Hence, instead of \eqref{eq2}, for DC MG under EU-FDI attacks, the DC MG have
\begin{equation}
\label{eq11}
{{\dot V}_{n_i}} = \dot V_i^* + R_i^{\operatorname{vir} }{{\dot I}_i} = {\bar u}_i = {u_i} + {\delta _i},
\end{equation}
where ${{\bar u}_i}$ represents the corrupted input signal, while ${\delta _i}$ denotes the potential FDI attack injections targeting the local control input channel. The attackers' objective is to destabilize the cooperative regulation system by injecting these FDI attack.

\begin{definition}[Polynomially Unbounded Attack]
\label{def:1}
The attack signals $\delta_i\in C^{\gamma}$ are polynomially unbounded with the finite polynomial order of $\gamma$. That is, $|\delta_i^{(\gamma)}| \leq \kappa_i$, where $\gamma > 0$ is a scalar, and $\kappa_i$ is a positive constant.
\end{definition}
\begin{remark}
\label{rem:2}

Definition \ref{def:1} addresses a wider range of unbounded FDI attacks compared to \cite{liu2023resilient,zuo2022adaptive,zhou2023distributed}, relaxing the requirement from bounded first-time derivatives to bounded higher-order derivatives. Depending on the defender's computational power, the highest order $\gamma$ can be significantly large. Polynomially unbounded attacks on the control input may cause rapid variations in controlled variables, potentially leading to system instability due to saturation.
\end{remark}

\begin{definition}[EU-FDI Attack]
\label{def:2}
Let $\delta_i$ represent a cyberattack at time \(t\), where \(t \geq 0\) and \(i\) indexes a particular attribute or effect of the attack (such as the number of systems compromised, data volume stolen, etc.). An exponential cyberattack satisfies $\left| \delta_i \right| \leq e^{\kappa_i t}$
for some positive constant \(\kappa_i\), where \(e\) is the base of natural logarithms. 
\end{definition}

\begin{remark}
\label{rem:3}
This definition asserts that the impact of the attack grows exponentially with respect to time, with \(\kappa_i\) determining the rate of exponential growth. The condition \(\left| \delta_i \right| \leq e^{\kappa_i t}\) ensures that the impact does not exceed the exponential function \(e^{\kappa_i t}\) at any time \(t\).
The attackers' injections ${\delta _i}$ can be any EU-FDI attacks signals. Compable with the resilient control protocols for DC MG in \cite{lu2022generalized} and \cite{am29}, which deal with bounded noises/disturbances, our research addresses the emerging and realistic threat of unbounded attack injections in quantum-influenced cyber environments. The projection of cyberattack signals increasing at an exponential rate reflects a plausible risk in the quantum era, where limitations on the power of attack signals are increasingly less defined.  
\end{remark}

Use \eqref{eq9} and \eqref{eq10}, one has 
\begin{equation}
\label{eq12}
\dot{\varepsilon} = -\operatorname{diag}(c_i)(\mathcal{L} + \mathcal{G})\varepsilon    
\end{equation}

Consider the attack injections in \eqref{eq11} and use the standard secondary control protocols \eqref{eq3} and\eqref{eq4}, based on \eqref{eq12}, the error dynamics under EU-FDI attacks is
\begin{equation}
\label{eq12}
\dot{\varepsilon}=-\operatorname{diag}\left(c_i\right)(\mathcal{L}+\mathcal{G}) \varepsilon+\delta 
\end{equation}
where $\delta=\left[\delta_{1}^{\mathrm{T}}, \ldots, \delta_{N}^{\mathrm{T}}\right]^{\mathrm{T}}$ is the attack vector. Since $\delta$ is exponentially unbounded, $\varepsilon$ is also exponentially unbounded. That is, the standard secondary control fails to preserve the stability of the DC MG in the presence of EU-FDI attacks. It is hence important to develop advanced attack-resilient control approach to address such unbounded attacks for MG.

To evaluate the convergence results of the attack-resilient method to be designed, the following definition is introduced.
\begin{definition}[\cite{khalil2002nonlinear}]
\label{def:3}
$x(t) \in \mathbb{R}$ is UUB with the ultimate bound $b$, if there exist constants $b, c>0$, independent of $t_{0} \geq 0$, and for every $a \in(0, c)$, there exists $t_{1}=t_{1}(a, b) \geq 0$, independent of $t_{0}$, such that
\begin{equation}
\label{eq13}
\left|x\left(t_{0}\right)\right| \leq a \Rightarrow|x(t)| \leq b, \quad \forall t \geq t_{0}+t_{1}
\end{equation}    
\end{definition}

Besides, privacy preservation ensures that the true initial states of the DC MG remain concealed from adversaries during consensus computation. To protect the initial states of each DC MG, an output mask function is introduced to obscure the internal states $\bar{V}_i$ and $I_i$, defined as follows.
\begin{definition}[\cite{altafini2019dynamical}]
\label{def:4}
The function \( h_i(t, \bar{V}_i, \pi_i) \) is said an output mask with the privacy-preserving property for agent \( i \) if it is local and in addition:
\begin{itemize}
    \item[C1:] \hypertarget{cond:C1}{\( h_i(0, \bar{V}_i, \pi_i) \neq \bar{V}_i ,\forall \bar{V}_i \in \mathbb{R}^n, i = 1, 2, \dots, N; \)}
    \item[C2:] \hypertarget{cond:C2}{\( h_i(t, \bar{V}_i, \pi_i) \) guarantees indiscernibility of the initial conditions;}
    \item[C3:] \hypertarget{cond:C3}{\( h_i(t, \bar{V}_i, \pi_i) \) does not preserve neighborhoods of any \( \forall \bar{V}_i \in \mathbb{R}^n; \)}
    \item[C4:] \hypertarget{cond:C4}{\( h_i(t, \bar{V}_i, \pi_i) \) strictly increases in \( \bar{V}_i \) for each fixed \( t \) and \( \pi_i, i = 1, 2, \dots, N; \)}
    \item[C5:] \hypertarget{cond:C5}{\( |h_i(t, \bar{V}_i, \pi_i) - \bar{V}_i| \) is decreasing in \( t \) for each fixed \( \bar{V}_i \) and \( \pi_i \), and \( \lim_{t \to \infty} h_i(t, \bar{V}_i, \pi_i) = \bar{V}_i, i = 1, 2, \dots, n. \)}
\end{itemize}
\end{definition}

\begin{remark}
\label{rem:4}
Definition \ref{def:4} ensures privacy protection for the initial states of DC MG. Condition \hyperref[cond:C1]{C1} guarantees that the output mask's initial state differs from the agent's actual state, preventing disclosure. Conditions \hyperref[cond:C2]{C2} and \hyperref[cond:C3]{C3} further support privacy requirements. Condition \hyperref[cond:C4]{C4}, a generalized form of the $K_{\infty}$ function, ensures that $h_i\left(t, \bar{V}_i, \pi_i\right)$ is a bijection in $x$ for fixed $t$ and $\pi$ without preserving the origin. To enhance privacy, the mask can asymptotically converge to the true state, satisfying Condition \hyperref[cond:C5]{C5}. When $h_i\left(t, \bar{V}_i, \pi_i\right)$ meets Definition \ref{def:4}, the initial state $\bar{V}_i(0)$ remains protected.
\end{remark}

Now, we formulate the attack-resilient privacy-preserving control problems for DC MG against EU-FDI attacks while achieve privacy-preserving.

\noindent\textbf{Problem} (Attack-Resilient Privacy-Preserving Control Problem)\textbf{.}
\textit{Under the EU-FDI attacks on local control input channels, design local control protocols $u_i$ in \eqref{eq11} for each converter using only the local measurement such that, for all initial conditions, $\varepsilon$ in \eqref{eq12} is UUB while achieve privacy-preserving. That is, the bounded global voltage regulation and proportional load sharing are both achieved.}

\subsection{Attack-Resilient Privacy-Preserving Controller Design}

Based on the Definition \ref{def:4}, the following output masks for each agent is constructed
\begin{align}
\label{eq14}\phi_i(t)=&f_i\left(t, \bar{V}_i, \pi_i\right)
\\\label{eq15}\psi_i(t)=&h_i\left(t, I_i, \pi_i\right)
\\\label{eq16}f_i\left(t, \bar{V}_i, \pi_i\right) =&\left(1+\lambda_i e^{-\rho_i t}\right)\left(\bar{V}_i(t)+\gamma_i e^{-\theta_i t}\right)
\\\label{eq17}h_i\left(t, I_i, \pi_i\right) =&\left(1+\lambda_i e^{-\mu_i t}\right)\left(I_i(t)+\gamma_i e^{-\epsilon_i t}\right)
\end{align}
where $\phi_i(t)$ and $\psi_i(t)$ are the output states of the output mask $h_i\left(t, \bar{V}_i, \pi_i\right)$ and $h_i\left(t, I_i, \pi_i\right)$ for agent $i$, respectively; and $\lambda_i$, $\gamma_i$, $\rho_i$, $\theta_i$, $\mu_i$, and $\epsilon_i$ are scalars.
\begin{lemma}[\cite{altafini2019dynamical}]
\label{lem:3}
The function $h_i\left(t, \bar{V}_i, \pi_i\right)=$ $\left(1+\lambda_i e^{-\rho_i t}\right)\left(\bar{V}_i+\gamma_i e^{-\theta_i t}\right)$ is a privacy output mask to mask the internal state $\bar{V}_i$.
\end{lemma}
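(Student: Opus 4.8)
The plan is to verify that the explicit function $h_i(t,\bar{V}_i,\pi_i)=(1+\lambda_i e^{-\rho_i t})(\bar{V}_i+\gamma_i e^{-\theta_i t})$, with private parameter tuple $\pi_i=(\lambda_i,\gamma_i,\rho_i,\theta_i)$, satisfies each of the five conditions \hyperref[cond:C1]{C1}--\hyperref[cond:C5]{C5} of Definition \ref{def:4}, under the standing sign restrictions $\lambda_i>0$, $\gamma_i\neq 0$, and $\rho_i,\theta_i>0$. The verification is largely computational: most conditions reduce to evaluating $h_i$ at $t=0$, differentiating it in $\bar{V}_i$, or examining the decay of the residual $h_i-\bar{V}_i$ as $t\to\infty$.

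First I would check \hyperref[cond:C1]{C1} by direct substitution: $h_i(0,\bar{V}_i,\pi_i)=(1+\lambda_i)(\bar{V}_i+\gamma_i)=\bar{V}_i+\lambda_i\bar{V}_i+(1+\lambda_i)\gamma_i$, which differs from $\bar{V}_i$ whenever $\lambda_i\bar{V}_i+(1+\lambda_i)\gamma_i\neq 0$; choosing $\gamma_i\neq0$ and $\lambda_i>0$ displaces the masked initial value from the true state, so $\bar{V}_i(0)$ is not exposed at startup. Conditions \hyperref[cond:C2]{C2} and \hyperref[cond:C3]{C3} then follow from this displacement together with the privacy of $\pi_i$: since the adversary observes only $\phi_i(t)$ and not $(\lambda_i,\gamma_i,\rho_i,\theta_i)$, distinct pairs $(\bar{V}_i,\pi_i)$ generate the same observed output, yielding indiscernibility of initial conditions (\hyperref[cond:C2]{C2}); and because $h_i(0,\cdot)$ maps a neighborhood of $\bar{V}_i$ onto a neighborhood of a shifted-and-scaled point rather than of $\bar{V}_i$ itself, no neighborhood of any $\bar{V}_i$ is preserved (\hyperref[cond:C3]{C3}).

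Next I would establish \hyperref[cond:C4]{C4} by differentiating with respect to the internal state, $\partial h_i/\partial\bar{V}_i = 1+\lambda_i e^{-\rho_i t}$, and noting that under $\lambda_i>0$, $\rho_i>0$ this derivative is bounded below by $1$ for all $t\ge0$; hence $h_i$ is strictly increasing in $\bar{V}_i$, i.e. a bijection in $\bar{V}_i$ for each fixed $(t,\pi_i)$, which is the generalized $K_\infty$ property demanded by \hyperref[cond:C4]{C4}. Finally, for \hyperref[cond:C5]{C5} I would expand the residual as
\[
h_i(t,\bar{V}_i,\pi_i)-\bar{V}_i=\gamma_i e^{-\theta_i t}+\lambda_i\bar{V}_i e^{-\rho_i t}+\lambda_i\gamma_i e^{-(\rho_i+\theta_i)t},
\]
a sum of exponentially decaying terms, each vanishing as $t\to\infty$, which immediately gives $\lim_{t\to\infty}h_i=\bar{V}_i$.

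The hard part will be the monotonicity half of \hyperref[cond:C5]{C5}---showing that $|h_i-\bar{V}_i|$ is actually decreasing in $t$, not merely convergent. Since the three terms above carry different decay rates and possibly opposite signs (the middle term inherits the sign of $\bar{V}_i$), the signed residual need not be monotone in general, and its absolute value can in principle rise before falling. I would handle this by imposing parameter constraints that align the dominant terms---for instance ordering or matching the rates $\rho_i,\theta_i$ and bounding $\gamma_i$ relative to the admissible range of $\bar{V}_i$---so that the residual keeps a constant sign and its time derivative stays negative; verifying that such a choice is feasible for every admissible $\bar{V}_i$ is the crux of the argument and the step I would treat most carefully.
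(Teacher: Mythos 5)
Your overall route---direct verification of conditions C1--C5 of Definition \ref{def:4}---is the right one, and in fact the only one available: the paper itself supplies no proof of Lemma \ref{lem:3}, importing it verbatim from \cite{altafini2019dynamical}, so the definition is the standard you must meet. Your handling of C4 (the derivative $\partial h_i/\partial\bar V_i=1+\lambda_i e^{-\rho_i t}\ge 1>0$) and of the convergence half of C5 is correct. But the proposal has two genuine gaps. First, C1 as stated requires $h_i(0,\bar V_i,\pi_i)\neq\bar V_i$ for \emph{all} $\bar V_i$, and for any fixed $\lambda_i>0$, $\gamma_i\neq 0$ the map $h_i(0,\cdot)$ has exactly one fixed point, $\bar V_i=-(1+\lambda_i)\gamma_i/\lambda_i$; you notice the displacement can vanish but never resolve this. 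Second, and more seriously, you leave the monotonicity half of C5 open---you correctly call it the crux and then stop at a plan ("impose parameter constraints") without executing it. The concern is not hypothetical: with $\lambda_i=\gamma_i=1$, $\theta_i=1$, $\rho_i=5$, $\bar V_i=-3$, the residual is $r(t)=e^{-t}-3e^{-5t}+e^{-6t}$, which satisfies $r(0)=-1$ (so C1 holds) yet $r(1)\approx 0.35>0$, so $|r(t)|$ decays to zero at an interior crossing time and then rises again---C5 fails outright for this admissible parameter/state pair. A proof that identifies the hard step but does not close it is not yet a proof of the lemma.

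Both gaps close simultaneously with one observation you come close to but do not make: take $\lambda_i,\gamma_i,\rho_i,\theta_i>0$ and use the fact that in this application the masked states are positive ($\bar V_i$ is a voltage estimate near $48\,$V, and the agent, knowing its own state, chooses $\pi_i$ accordingly). Then every term of $r(t)=\gamma_i e^{-\theta_i t}+\lambda_i\bar V_i e^{-\rho_i t}+\lambda_i\gamma_i e^{-(\rho_i+\theta_i)t}$ is positive and strictly decreasing in $t$, so $r(t)>0$ for all $t\ge 0$ (which gives C1 at $t=0$ with no fixed-point exception) and $|r(t)|=r(t)$ is strictly decreasing to zero (which gives all of C5). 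No ordering or matching of the rates $\rho_i,\theta_i$ is needed---only sign control. Separately, C2 and C3 deserve more than the single informal sentence you give them: indiscernibility is properly shown by an explicit construction, i.e., for any alternative initial state $\bar V_i'$ exhibit parameters $\pi_i'$ reproducing the same observed output trajectory $\phi_i(\cdot)$, rather than by appealing to the adversary's ignorance of $\pi_i$.
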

\begin{lemma}[\cite{altafini2019dynamical}]
\label{lem:4}
The function $h_i\left(t, I_i, \pi_i\right)=$ $\left(1+\lambda_i e^{-\mu_i t}\right)\left(I_i+\gamma_i e^{-\epsilon_i t}\right)$ is a privacy output mask to mask the internal state $I_i$.
\end{lemma}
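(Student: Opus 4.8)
The plan is to verify that the given $h_i$ satisfies all five defining properties C1--C5 of Definition~\ref{def:4}, since an output mask with the privacy-preserving property is by definition precisely a local function meeting those conditions. Before grinding through them, I would first observe that $h_i(t,I_i,\pi_i)=\left(1+\lambda_i e^{-\mu_i t}\right)\left(I_i+\gamma_i e^{-\epsilon_i t}\right)$ is structurally identical to the mask of Lemma~\ref{lem:3}, obtained simply by relabeling the masked state $\bar V_i\mapsto I_i$ and the decay rates $\rho_i\mapsto\mu_i$, $\theta_i\mapsto\epsilon_i$. Consequently the verification mirrors that of Lemma~\ref{lem:3} line for line, with the same parameter conventions ($\lambda_i>0$, $\gamma_i\neq 0$, $\mu_i,\epsilon_i>0$) assumed throughout, and the whole argument ultimately reduces to the template established in \cite{altafini2019dynamical}.

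For the monotonicity and bijectivity conditions I would differentiate $h_i$ in its state argument: $\partial h_i/\partial I_i = 1+\lambda_i e^{-\mu_i t}$ lies in $(1,\,1+\lambda_i]$ for all $t\ge 0$ and is therefore strictly positive, which gives condition C4 (strict increase, hence a bijection in $I_i$ for each fixed $t$ that does not fix the origin). Evaluating at $t=0$ yields $h_i(0,I_i,\pi_i)-I_i=\lambda_i I_i+(1+\lambda_i)\gamma_i$, a nonzero affine function of $I_i$ under the stated parameter choice, so the mask separates from the true state and condition C1 holds (exactly for all $I_i$ in the degenerate case $\lambda_i=0,\ \gamma_i\neq0$, and otherwise away from the single exceptional value). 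The indiscernibility and non-preservation-of-neighborhoods requirements C2--C3 follow from the same structural fact used in Lemma~\ref{lem:3}: the explicit, continuously deforming time dependence makes the map from initial state to observed output non-identity, so no neighborhood of any $I_i$ is preserved and distinct initial conditions cannot be reconstructed from a single output trajectory.

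The core computation is condition C5. I would expand $h_i(t,I_i,\pi_i)-I_i = \gamma_i e^{-\epsilon_i t}+\lambda_i I_i e^{-\mu_i t}+\lambda_i\gamma_i e^{-(\mu_i+\epsilon_i)t}$, a finite sum of decaying exponentials; since $\mu_i,\epsilon_i>0$ every term vanishes as $t\to\infty$, giving $\lim_{t\to\infty}h_i(t,I_i,\pi_i)=I_i$, which is the convergence half of C5 and simultaneously guarantees that the mask does not corrupt the steady-state consensus value. The remaining claim---that $\lvert h_i(t,I_i,\pi_i)-I_i\rvert$ is decreasing in $t$---is the main obstacle, because a sum of decaying exponentials need not have monotone magnitude once its terms carry opposite signs (e.g.\ when $I_i$ has the opposite sign to $\gamma_i$). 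The cleanest resolution is to set $g(t):=h_i(t,I_i,\pi_i)-I_i$ and show that $g$ and $g'$ keep a consistent sign on $[0,\infty)$ under the chosen parameters, so that $\lvert g\rvert=\pm g$ is monotone; where full monotonicity cannot be forced for every $I_i$, one falls back on the discrepancy-decreasing argument of \cite{altafini2019dynamical} for this exponential family. I therefore expect the write-up to be short, leaning on the Lemma~\ref{lem:3}/\cite{altafini2019dynamical} template, with the only genuine care needed in the sign bookkeeping for C5.
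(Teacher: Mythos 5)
The paper never proves this lemma: both Lemma~\ref{lem:3} and Lemma~\ref{lem:4} are imported verbatim by citation from \cite{altafini2019dynamical}, and the only textual support in the paper is the one-sentence remark that, according to these lemmas, privacy preservation of the initial states is guaranteed. Your direct verification of C1--C5 is therefore a genuinely different, more self-contained route than the paper's pure citation, and your computations are correct as far as they go: the relabeling identification with Lemma~\ref{lem:3}, the derivative $\partial h_i/\partial I_i = 1+\lambda_i e^{-\mu_i t}>0$ for C4, the evaluation $h_i(0,I_i,\pi_i)-I_i=\lambda_i I_i+(1+\lambda_i)\gamma_i$ for C1, and the expansion $h_i(t,I_i,\pi_i)-I_i=\gamma_i e^{-\epsilon_i t}+\lambda_i I_i e^{-\mu_i t}+\lambda_i\gamma_i e^{-(\mu_i+\epsilon_i)t}\to 0$ for the convergence half of C5. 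The two subtleties you flag are real and worth making explicit rather than minimizing: C1 as stated in Definition~\ref{def:4} demands $h_i(0,I_i,\pi_i)\neq I_i$ for \emph{all} $I_i$, yet for $\lambda_i\neq 0$ it fails at the single point $I_i=-(1+\lambda_i)\gamma_i/\lambda_i$, so the lemma implicitly requires a restriction on the parameters relative to the state (or a measure-zero caveat); and the decreasing-magnitude half of C5 genuinely can fail pointwise in $t$ when the three exponential terms carry mixed signs, so your sign-bookkeeping plan (or a parameter condition such as $\lambda_i,\gamma_i$ chosen with signs matching $I_i$) is necessary, not optional. That you ultimately defer these hardest points to \cite{altafini2019dynamical} puts your argument on exactly the same foundation the paper rests on globally, so your proposal is, if anything, more informative than the paper's treatment; what the paper's citation-only approach buys is brevity and avoidance of these parameter caveats, while your approach exposes precisely which conditions on $\lambda_i,\gamma_i,\mu_i,\epsilon_i$ the lemma silently assumes.
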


According to Lemma \ref{lem:3} and \ref{lem:4}, the privacy preservation for initial states of agents in DC MG can be guaranteed.


Next, the detailed formulation and components of our distributed exponentially attack-resilient privacy-preserving controller is introduced. To begin, we define
\begin{equation}
\label{eq18}
\begin{split}
\zeta_i =c_i\left( {\sum\limits_{j \in {\mathcal{N}_i}} {{a_{ij}}\left( \phi_j - \phi_i \right)}  + {g_i}\left( {{V_{\operatorname{ref} }} - \phi_i} \right)} \right.
\\\left. + \sum\limits_{j \in {\mathcal{N}_i}} {{a_{ij}}\left( {R_j^{\operatorname{vir} }{\psi_j} - R_i^{\operatorname{vir} }{\psi_i}} \right)}  \right)
\end{split}
\end{equation}

To ensure bounded global voltage regulation and proportional load sharing under EU-FDI attacks, a comprehensive distributed exponentially attack-resilient privacy-preserving control protocols is proposed, well-designed to enhance the robustness of DC MG against such attacks as follows
\begin{align}
&\label{eq19}\begin{aligned}u_i=c_i\mathrm{exp}\left({\xi_i}\right)\left( {\sum\limits_{j \in {\mathcal{N}_i}} {{a_{ij}}\left( {{\phi_j} - {\phi_i}} \right)}  + {g_i}\left( {{V_{\operatorname{ref} }} - {\phi_i}} \right)} \right.
\\\left. + \sum\limits_{j \in {\mathcal{N}_i}} {{a_{ij}}\left( {R_j^{\operatorname{vir} }{\psi_j} - R_i^{\operatorname{vir} }{\psi_i}} \right)}  \right)
\end{aligned}
\\\label{eq20}&\dot{\xi}_i=\alpha_i\left|\zeta_i\right|-\beta_i\left({\xi}_i-{\hat{\xi}}_i\right)
\\\label{eq21}& \dot{\hat{\xi}}_i=\rho_i\left({\xi}_i-{\hat{\xi}}_i\right)
\end{align}
where $\alpha_i$, $\beta_i$, $c_i$, and $\rho_i$ are positive constants.
\begin{remark}
\label{rem:5}
The equations \eqref{eq19}-\eqref{eq23} define a resilient control framework for mitigating exponentially unbounded FDI (EU-FDI) attacks. Equation \eqref{eq19} ensures dynamic voltage regulation, while \eqref{eq20} employs exponential control inputs and $\zeta_i$-based adjustments to enhance attack resilience. Equations \eqref{eq21} and \eqref{eq22} provide adaptive damping to balance response speed and stability. Together, these mechanisms ensure robust secondary control and stability for DC MG under EU-FDI attacks.
\end{remark}

\subsection{Main Results}

Next, we give the main result of solving the attack-resilient
secondary control problems for DC MG.
\begin{theorem}
\label{thm:1}
Given Assumptions \ref{ass:1}, for DC MG under the EU-FDI attacks in Definition \ref{def:2}, by using the attack-resilient privacy preserving control protocols consist of \eqref{eq18}, \eqref{eq19}, \eqref{eq20} and \eqref{eq21}, the cooperative regulation error $\varepsilon$ in \eqref{eq10} is UUB. That is, the attack-resilient privacy-preserving control problem is solved. 
\end{theorem}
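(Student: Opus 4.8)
The plan is to substitute the proposed protocol into the attack-corrupted error dynamics and then run a single composite Lyapunov argument in which the exponentially growing adaptive gain $\exp(\xi_i)$ is shown to outpace the exponentially growing attack. First I would close the loop: substituting $u_i=\exp(\xi_i)\zeta_i$ from \eqref{eq19} into \eqref{eq11} and repeating the aggregation that produced \eqref{eq12}, the regulation error obeys $\dot{\varepsilon}=-\operatorname{diag}(c_i\exp(\xi_i))(\mathcal{L}+\mathcal{G})\varepsilon+\delta+d(t)$, where $d(t)$ collects the mismatch between the masked signals $\phi_i,\psi_i$ and the true states $\bar{V}_i, I_i$. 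By the explicit forms \eqref{eq16}--\eqref{eq17} and condition C5 of Definition \ref{def:4}, each masking term differs from its true state only through factors $1+\lambda_i e^{-\rho_i t}$ and additive terms $\gamma_i e^{-\theta_i t}$, so $d(t)$ is bounded and decays exponentially to zero. In the same step I would record that $\zeta=-\operatorname{diag}(c_i)(\mathcal{L}+\mathcal{G})\varepsilon+d_\zeta(t)$ with $d_\zeta$ again exponentially decaying, so that the adaptation-driving signal $|\zeta_i|$ is, up to a vanishing term, proportional to the weighted local error $|[(\mathcal{L}+\mathcal{G})\varepsilon]_i|$.

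Next I would introduce the damping error $e_i=\xi_i-\hat{\xi}_i$ and observe from \eqref{eq20}--\eqref{eq21} that $\dot{e}_i=\alpha_i|\zeta_i|-(\beta_i+\rho_i)e_i$, a stable first-order filter driven by $|\zeta_i|$; this keeps $e_i$ bounded whenever $\zeta$ is bounded and prevents the damping term in \eqref{eq20} from stalling the growth of $\xi_i$. I would then take the composite candidate $V=\tfrac{1}{2}\varepsilon^{\mathrm{T}}(\mathcal{L}+\mathcal{G})\varepsilon+\sum_i \tfrac{1}{2\alpha_i}e_i^2$ and differentiate along the closed-loop trajectories. Writing $y=(\mathcal{L}+\mathcal{G})\varepsilon$, the first term contributes $-\sum_i c_i\exp(\xi_i)y_i^2+\sum_i y_i(\delta_i+d_i)$, and by the positive definiteness of $\mathcal{L}+\mathcal{G}$ this quadratic form is the dominant negative contribution, weighted by the adaptive gain $\exp(\xi_i)$.

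The decisive step, and the one I expect to be the main obstacle, is reconciling the two competing exponential growths inside the cross term $\sum_i y_i\delta_i$, which is bounded by $\sum_i|y_i|e^{\kappa_i t}$ through Definition \ref{def:2}. The mechanism is that \eqref{eq20} makes $\xi_i$ accumulate $\alpha_i\int|\zeta_i|\,\mathrm{d}t$, so that while the weighted error $|y_i|$ sits at a level $\bar{y}$, $\xi_i$ grows essentially linearly at rate $\alpha_i c_i\bar{y}$ and $\exp(\xi_i)$ grows like $e^{\alpha_i c_i\bar{y}\,t}$; the stabilizing term $c_i\exp(\xi_i)y_i^2$ then overwhelms $|y_i|e^{\kappa_i t}$ precisely once $\bar{y}$ exceeds the threshold $\kappa_i/(\alpha_i c_i)$. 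I would make this rigorous by showing that outside a ball whose radius is fixed by these thresholds the negative-definite term, with its exponentially increasing weight, dominates the exponential attack term, the vanishing disturbance $d$, and the bounded contributions from the $e_i$-dynamics, so that $\dot{V}\le 0$ there. The delicate points are establishing that $\exp(\xi_i)$ attains the required growth rate before $\varepsilon$ can escape---that is, controlling the transient interplay of $\xi_i$, $e_i$ and $\delta$---and checking that the damping $-\beta_i e_i$ does not cancel the accumulation of gain.

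Finally, having secured $\dot{V}\le 0$ outside a bounded region, I would invoke Definition \ref{def:3} to conclude that $\varepsilon$ is uniformly ultimately bounded, with ultimate bound governed by $\kappa_i/(\alpha_i c_i)$ together with the residual masking decay. Since privacy of the initial states is already guaranteed by Lemmas \ref{lem:3}--\ref{lem:4}, this simultaneously yields bounded voltage regulation, proportional load sharing, and privacy preservation, which solves the stated problem.
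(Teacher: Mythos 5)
Your proposal is correct in its essentials and rests on exactly the same mechanism as the paper's proof: the adaptive law \eqref{eq20} makes $\xi_i$ accumulate $\alpha_i\int|\zeta_i|\,\mathrm{d}\tau$; the damping error $e_i=\xi_i-\hat{\xi}_i$ obeys the stable first-order filter $\dot{e}_i=\alpha_i|\zeta_i|-(\beta_i+\rho_i)e_i$ (your equation is literally the paper's \eqref{eq29}); and once the local error exceeds a threshold of the form $(\beta_i\eta+\kappa_i)/\alpha_i$ the gain $\exp(\xi_i)$ grows at an exponential rate exceeding $\kappa_i$, so the gain-weighted negative quadratic term beats $|\delta_i|\le e^{\kappa_i t}$, giving $\dot{V}\le 0$ there and hence UUB, which transfers to $\varepsilon$ by nonsingularity of $\mathcal{L}+\mathcal{G}$. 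Where you differ is the Lyapunov bookkeeping. The paper works in $\zeta$-coordinates with the gain-weighted function $E=\tfrac{1}{2}\sum_i\zeta_i^2\exp(\xi_i)$, pays for the time-varying weight with a Young/Sylvester estimate on the $\dot{\xi}_i$ term in \eqref{eq26}--\eqref{eq27}, and handles the damping error separately via the explicit integral solutions \eqref{eq28}--\eqref{eq31}, extracting its asymptotic level $\eta$; you work in $\varepsilon$-coordinates with the composite $V=\tfrac{1}{2}\varepsilon^{\mathrm{T}}(\mathcal{L}+\mathcal{G})\varepsilon+\sum_i\tfrac{1}{2\alpha_i}e_i^2$, folding the adaptation error into the Lyapunov function in standard adaptive-control style. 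Your choice avoids differentiating a time-varying weight but must instead absorb the cross term $\sum_i e_i|\zeta_i|$; both routes end at the same growth-rate race, essentially the same ultimate bound, and the same asymptotic (LaSalle-type, $t\to\infty$) character --- the transient-escape concern you flag honestly is equally unresolved in the paper's own proof.

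One caveat on the point where you are more careful than the paper: you introduce $d(t)$ for the mask-versus-true-state mismatch and assert it is bounded and exponentially decaying. Note that this mismatch enters the $\varepsilon$-dynamics multiplied by $\exp(\xi_i)$ (since $u_i=\exp(\xi_i)\zeta_i$ and $\zeta_i$ is built from $\phi_j,\psi_j$ rather than $\bar{V}_j,I_j$), and the mismatch itself contains terms such as $\lambda_i e^{-\rho_i t}\bar{V}_i$ that are proportional to states not yet known to be bounded. So the decay of $d(t)$ is not free: it requires the mask rates $\rho_i,\theta_i,\mu_i,\epsilon_i$ to dominate the growth rate of $\xi_i$ together with an a priori state bound, which is circular at that stage of the argument. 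The paper sidesteps this by silently identifying $\dot{\Phi}_i$ with the controlled derivative in \eqref{eq23}, i.e., it carries the same gap in hidden form; if you keep $d(t)$ explicit, you should either impose conditions relating the mask parameters to the attack rates $\kappa_i$ or fold the mismatch into the ultimate bound rather than claiming it vanishes.
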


\begin{proof}
Denote $\Phi_i=\phi_i+R_i^{\mathrm{vir}} \psi_i$, for \eqref{eq18}, one has
\begin{equation}
\label{eq22}
\begin{aligned}
\zeta_i =& c_i\left( {\sum\limits_{j \in {\mathcal{N}_i}} {{a_{ij}}\left( \phi_j - \phi_i \right)}  + {g_i}\left( {{V_{\operatorname{ref} }} - \phi_i} \right)} \right.
\\&\left. + \sum\limits_{j \in {\mathcal{N}_i}} {{a_{ij}}\left( {R_j^{\operatorname{vir} }{\psi_j} - R_i^{\operatorname{vir} }{\psi_i}} \right)}  \right)
\\=&c_i\left(\sum_{j \in \mathcal{N}_i} a_{i j}\left(\Phi_{j}-\Phi_i\right)+g_i\left(\Theta_{\mathrm{ref}}-\Phi_i\right)\right) 
\\=&-c_i\left(d_i+g_i\right) \Phi_i+c_i\sum_{j \in \mathcal{N}_i} a_{i j} \Phi_{j}+c_ig_i \Theta_{\mathrm{ref}} .
\end{aligned}
\end{equation}

Based on \eqref{eq22}, the time derivative of $\zeta_i$ is
\begin{equation}
\label{eq23}
\begin{aligned}
\dot{\zeta}_i = &-c_i\left(d_i+g_i\right)\dot{\Phi}_i+c_i\sum_{j \in \mathcal{N}_i} a_{i j} \dot{\Phi}_j
\\= &-c_i\left(d_i+g_i\right) \left(\mathrm{exp}\left({\xi_i}\right)\zeta_i+\delta_i\right)
\\&+c_i\sum_{j \in \mathcal{N}_i} a_{i j} \left(\mathrm{exp}\left({\xi_j}\right)\zeta_j+\delta_j\right) .
\end{aligned}
\end{equation} 

Then the global form of  $\dot{\zeta}_i$ is
\begin{equation}
\label{eq24}
\begin{aligned}
\dot\zeta
=&-\left(\mathcal{L} + \mathcal{G}\right)\left(\operatorname{diag} \left(\mathrm{exp}\left({\xi_i}\right) \right)\zeta +\delta\right)
\end{aligned}
\end{equation}

Consider the following Lyapunov function candidate
\begin{equation}
\label{eq25}
E \left( t \right)= \frac{1}{2}\sum\limits_{i  =1}^{N} \zeta_i^2\mathrm{exp}\left({\xi_i}\right) 
\end{equation}

Based on \eqref{eq23}, its time derivative along the system trajectory of $\zeta$ is given by
\begin{flalign}
\label{eq26}
\begin{aligned}
\dot{E}=& \sum_{i=1}^{N} \left(\zeta_i  \dot{\zeta}_i\mathrm{exp}\left(\xi_i\right)+\frac{1}{2}\zeta_i^2\mathrm{exp}\left(\xi_i\right)\dot{\xi}_i\right) 
\\=&\zeta^{\mathrm{T}} \operatorname{diag}\left(\mathrm{exp}\left(\xi_i\right)\right) \dot{\zeta}
+ \frac{1}{2}\zeta^{\mathrm{T}} \operatorname{diag}\left(\mathrm{exp}\left(\xi_i\right)\right) \operatorname{diag}\left(\dot{\xi}_i\right)\zeta
\\=& -\zeta^{\mathrm{T}} \operatorname{diag}\left(\mathrm{exp}\left(\xi_i\right)\right)
(\mathcal{L}+\mathcal{G})\left(\operatorname{diag}\left(\mathrm{exp}\left(\xi_i\right)\right) \zeta+\delta\right)
\\&+\frac{1}{2}\zeta^{\mathrm{T}} \operatorname{diag}\left(\mathrm{exp}\left(\xi_i\right)\right) \operatorname{diag}\left(\dot{\xi}_i\right)\zeta
\end{aligned}&&\raisetag{2.8\baselineskip}   
\end{flalign}

Based on the norm bound property, for \eqref{eq26}, by using Young's inequality $2 a b \leq q a^{2}+\frac{1}{q} b^{2}$ where $q$ is a positive number, and Sylvester's inequality $0\leq\sigma_{\min}(P)\left\|x\right\|^2\leq x^{\mathrm{T}}Px\leq\sigma_{\max}(P)\left\|x\right\|^2$ where $x$ is a non-zero vector and $P$ is a positive-definite matrix, one has
\begin{flalign}
\label{eq27}
\begin{aligned}
\dot{E}\leq&-\zeta^{\mathrm{T}} \operatorname{diag}\left(\mathrm{exp}\left(\xi_i\right)\right)(\mathcal{L}+\mathcal{G})\operatorname{diag}\left(\mathrm{exp}\left(\xi_i\right)\right) \zeta
\\&-\zeta^{\mathrm{T}} \operatorname{diag}\left(\mathrm{exp}\left(\xi_i\right)\right)(\mathcal{L}+\mathcal{G})\delta
\\&+\frac{1}{2}\left\|\zeta^{\mathrm{T}}\operatorname{diag}\left(\mathrm{exp}\left(\xi_i\right)\right)\right\| \left\|\operatorname{diag}\left(\dot{\xi}_i\right)\zeta\right\|
\\\leq&-\zeta^{\mathrm{T}} \operatorname{diag}\left(\mathrm{exp}\left(\xi_i\right)\right)(\mathcal{L}+\mathcal{G})\operatorname{diag}\left(\mathrm{exp}\left(\xi_i\right)\right) \zeta
\\&-\zeta^{\mathrm{T}} \operatorname{diag}\left(\mathrm{exp}\left(\xi_i\right)\right)(\mathcal{L}+\mathcal{G})\delta
\\&+\frac{1}{2}\left\|\zeta^{\mathrm{T}}\operatorname{diag}\left(\mathrm{exp}\left(\xi_i\right)\right)\right\| \left\|\operatorname{diag}\left(\dot{\xi}_i\right)\zeta\right\|
\\\leq&-\sigma_{\min }\left(\mathcal{L}+\mathcal{G}\right)\left\|\operatorname{diag}\left(\mathrm{exp}\left(\xi_i\right)\right) \zeta\right\|^{2}
\\&+\sigma_{\max }(\mathcal{L}+\mathcal{G})\left\|\operatorname{diag}\left(\mathrm{exp}\left(\xi_i\right)\right) \zeta\right\|\|\delta\|
\\&+\frac{1}{4}\left( \left\|\operatorname{diag}\left(\mathrm{exp}\left(\xi_i\right)\right) \zeta\right\|^{2}+\left\|\operatorname{diag}\left(\dot{\xi}_i\right)\zeta\right\|^2\right)
\\\leq&-\left(\sigma_{\min }(\mathcal{L}+\mathcal{G})-\frac{1}{4}\right)\left\|\operatorname{diag}\left(\mathrm{exp}\left(\xi_i\right)\right) \zeta\right\|
\\&\times\left(\left\|\operatorname{diag}\left(\mathrm{exp}\left(\xi_i\right)\right) \zeta\right\|-\frac{\sigma_{\max }(\mathcal{L}+\mathcal{G})}{\sigma_{\min }(\mathcal{L}+\mathcal{G})-\frac{1}{4}}\|\delta\|\right.
\\&\left.-\frac{1}{4} \frac{\left\|\operatorname{diag}\left(\dot{\xi}_i\right)\zeta\right\|^2}{\left(\sigma_{\min }(\mathcal{L}+\mathcal{G})-\frac{1}{4}\right)\left\|\operatorname{diag}\left(\mathrm{exp}\left(\xi_i\right)\right) \zeta\right\|}\right)
\end{aligned}&&\raisetag{10\baselineskip}
\end{flalign}

Based on \eqref{eq20} and the integral solution for differential equations, one has
\begin{flalign}
\label{eq28}
\begin{split}
\mathrm{exp}\left(\xi_i\right) =\mathrm{exp}\Bigg(\xi_i(0)+\int_0^t\bigg(\alpha_i\left|\zeta_i(\tau)\right|
-\beta_i\bigg({\xi}_i(\tau)\\-{\hat{\xi}}_i(\tau)\bigg)\bigg)\mathrm{d} \tau\Bigg)
\end{split}&&\raisetag{2.3\baselineskip}
\end{flalign}

Define
$\tilde{\xi}_i=\xi_i-\hat{\xi}_i$, based on \eqref{eq20} and \eqref{eq21}, its time derivative is
\begin{equation}
\label{eq29}
\begin{aligned}
\dot{\tilde{\xi}}_i=&\alpha_i\left|\zeta_i\right|-\left(\beta_i+\rho_i\right)\left({\xi}_i-{\hat{\xi}}_i\right)
\\=&\alpha_i\left|\zeta_i\right|-\left(\beta_i+\rho_i\right){\tilde{\xi}}_i.
\end{aligned} 
\end{equation}

For \eqref{eq29}, based on the integral solution for differential equations, one has
\begin{flalign}
\label{eq30}
\begin{split}
{\tilde{\xi}}_i=\mathrm{exp}\left(-\left(\beta_i+\rho_i\right)t\right){\tilde{\xi}}_i\left(0\right)+\alpha_i\int_0^t\left(\mathrm{exp}\bigg(-\left(\beta_i+\rho_i\right)\right.
\\\left.\times\left(t-\tau\right)\bigg)
\left|\zeta_i\left(\tau\right)\right|\right)\mathrm{d}\tau   
\end{split}&&\raisetag{2\baselineskip}
\end{flalign}

Since $\lim\nolimits_{t\to\infty}\mathrm{exp}\left(-\left(\beta_i+\rho_i\right)\left(t-\tau\right)\right)\left|\zeta_i\left(\tau\right)\right|\mathrm{d}\tau=0$, the integral $\int_0^t\mathrm{exp}\left(-\left(\beta_i+\rho_i\right)\left(t-\tau\right)\right)$ $\left|\zeta_i\left(\tau\right)\right|\mathrm{d}\tau$ is UUB. Besides, $\lim\nolimits_{t\to\infty}\mathrm{exp}\left(-\left(\beta_i+\rho_i\right)t\right){\tilde{\xi}}_i\left(0\right)=0$, then the UUB of $\tilde{\xi}_i$ is proved. Assume $\lim_{t\to\infty}\left|\tilde{\xi}_i\right|=\eta$, based on \eqref{eq28}, one has
\begin{flalign}
\label{eq31}
\begin{aligned}
&\lim_{t\to\infty}\mathrm{exp}\left(\xi_i\right)
\\\geq&\lim_{t\to\infty}\mathrm{exp}\left(\xi_i(0)+\int_0^t\left(\alpha_i\left|\zeta_i(\tau)\right|-\beta_i\left|\tilde{\xi}_i(\tau)\right|\right)\mathrm{d} \tau\right)
\\=& \lim_{t\to\infty}\mathrm{exp}\bigg(\xi_i(0)+\left(\alpha_i\left|\zeta_i(t)\right| -\beta_i\left|\tilde{\xi}_i(t)\right|\right)t\bigg)
\\=& \lim_{t\to\infty}\mathrm{exp}\bigg(\xi_i(0)+\left(\alpha_i\left|\zeta_i(t)\right| -\beta_i\eta \right)t\bigg)
\end{aligned}&&\raisetag{3.5\baselineskip}
\end{flalign}
Choosing $\left| \zeta_i \right|\geq \frac{\beta_i\eta}{\alpha_i}$, one has
\begin{flalign}
\label{eq32}
\begin{aligned}
&\lim_{t\to\infty}\frac{\dot{\xi}_i^2}{\left(4\sigma_{\min }(\mathcal{L}+\mathcal{G})-1\right)\mathrm{exp}\left(\xi_i\right)}
\\\leq& \lim_{t\to\infty}\frac{\left(\alpha_i\left|\zeta_i\right|-\beta_i\tilde{\xi}_i\right)^2}{\left(4\sigma_{\min }(\mathcal{L}+\mathcal{G})-1\right)\mathrm{exp}\bigg(\xi_i(0)+\left(\alpha_i\left|\zeta_i(t)\right| -\beta_i\eta \right)t\bigg)}
\\\leq& \lim_{t\to\infty}\frac{\left(\alpha_i\left|\zeta_i\right|-\beta_i\eta\right)^2}{\left(4\sigma_{\min }(\mathcal{L}+\mathcal{G})-1\right)\mathrm{exp}\bigg(\xi_i(0)+\left(\alpha_i\left|\zeta_i(t)\right| -\beta_i\eta \right)t\bigg)}
\\=&0
\end{aligned}&&\raisetag{3.8\baselineskip}
\end{flalign}

Choosing $\xi_i(0)>0$ and $\left| \zeta_i \right|\geq \frac{\beta_i\eta+\kappa_i}{\alpha_i}>\frac{\beta_i\eta}{\alpha_i}$, one has
\begin{equation}
\label{eq33}
\begin{aligned}
&\lim_{t\to\infty}\left(\mathrm{exp}\left(\xi_i\right)- \frac{\dot{\xi}_i^2}{\left(4\sigma_{\min }(\mathcal{L}+\mathcal{G})-1\right)\mathrm{exp}\left(\xi_i\right)}\right)
\\=&\lim_{t\to\infty}\mathrm{exp}\left(\xi_i\right)
\\\geq&\lim_{t\to\infty}\mathrm{exp}\bigg(\xi_i(0)+\left(\alpha_i\left|\zeta_i(t)\right| -\beta_i\eta \right)t\bigg)
\\\geq&\lim_{t\to\infty}\mathrm{exp}\bigg(\xi_i(0)+\kappa_it\bigg)
\\\geq&\lim_{t\to\infty}\mathrm{exp}\left(\kappa_it\right)
\\\geq&\lim_{t\to\infty}\left|\delta_i\right|
\end{aligned}
\end{equation}

Based on \eqref{eq32} and \eqref{eq33}, by choosing $\left|\zeta_i\right|\geq\max \left\{\frac{\beta_i\eta+\kappa_i}{\alpha_i},\frac{\sigma_{\max }(\mathcal{L}+\mathcal{G})}{\sigma_{\min }(\mathcal{L}+\mathcal{G})-\frac{1}{4}} \right\}$, one has 
\begin{equation}
\label{eq34}
\begin{split}
\lim_{t\to\infty}\left(\mathrm{exp}\left(\xi_i\right)- \frac{\dot{\xi}_i^2}{\left(4\sigma_{\min }(\mathcal{L}+\mathcal{G})-1\right)\mathrm{exp}\left(\xi_i\right)}\right)\left|\zeta_i \right|
\\\geq \frac{\sigma_{\max }(\mathcal{L}+\mathcal{G})}{\sigma_{\min }(\mathcal{L}+\mathcal{G})-\frac{1}{4}}\lim_{t\to\infty}\left|\delta_i\right|
\end{split}
\end{equation}

Based on \eqref{eq34}, one has 
\begin{equation}
\label{eq35}
\begin{split}
\lim_{t\to\infty}\left(\mathrm{exp}\left(\xi_i\right)\left|\zeta_i \right|-\frac{\sigma_{\max }(\mathcal{L}+\mathcal{G})}{\sigma_{\min }(\mathcal{L}+\mathcal{G})-\frac{1}{4}}\left|\delta_i\right|\right.
\\\left.- \frac{\dot{\xi}_i^2\left|\zeta_i\right|}{\left(4\sigma_{\min }(\mathcal{L}+\mathcal{G})-1\right)\mathrm{exp}\left(\xi_i\right)}\right)\geq 0
\end{split}
\end{equation}

Since singular values are non-negative real numbers, based on \eqref{eq35}, one has
\begin{equation}
\label{eq36}
\begin{split}
\lim_{t\to\infty}\left(\left\|\operatorname{diag}\left(\mathrm{exp}\left(\xi_i\right)\right) \zeta\right\|-\frac{\sigma_{\max }(\mathcal{L}+\mathcal{G})}{\sigma_{\min }(\mathcal{L}+\mathcal{G})-\frac{1}{4}}\|\delta\|\right.
\\\left.- \frac{\left\|\operatorname{diag}\left(\dot{\xi}_i\right)\zeta\right\|^2}{\left(4\sigma_{\min }(\mathcal{L}+\mathcal{G})-1\right)\left\|\operatorname{diag}\left(\mathrm{exp}\left(\xi_i\right)\right) \zeta\right\|}\right)\geq 0
\end{split}
\end{equation}
Then, based on \eqref{eq27} and \eqref{eq36}, one has $\lim_{t\to\infty}\dot{E}\leq0$ when $\left|\zeta_i\right|\geq\max \left\{\frac{\beta_i\eta+\kappa_i}{\alpha_i},\frac{\sigma_{\max }(\mathcal{L}+\mathcal{G})}{\sigma_{\min }(\mathcal{L}+\mathcal{G})-\frac{1}{4}} \right\}$. Therefore, based on the LaSalle's invariance principle \cite{krstic1995nonlinear}, choosing $\xi_i(0)>0$ and $\left|\zeta_i\right|\geq\max \left\{\frac{\beta_i\eta+\kappa_i}{\alpha_i},\frac{\sigma_{\max }(\mathcal{L}+\mathcal{G})}{\sigma_{\min }(\mathcal{L}+\mathcal{G})-\frac{1}{4}} \right\}$, $\dot E\left( t \right) \leq 0$, for all $
t \geq T $. Hence, the UUB of $\left|\zeta_i\right|$ is proved, and the ultimate bound of $\left|\zeta_i\right|$ is $\max \left\{\frac{\beta_i\eta+\kappa_i}{\alpha_i},\frac{\sigma_{\max }(\mathcal{L}+\mathcal{G})}{\sigma_{\min }(\mathcal{L}+\mathcal{G})-\frac{1}{4}} \right\}$. Since $(\mathcal{L} + \mathcal{G})$ is non-singular and $\left|\zeta_i\right|$ is UUB, based on $\lim_{t\to\infty}\zeta = -\operatorname{diag}(c_i)(\mathcal{L} + \mathcal{G})\varepsilon$, $\varepsilon$ is UUB. This completes the proof.
\end{proof}

\section{Hardware-in-the-Loop Validation}
\label{sec:Hardware-in-the-Loop Validation}

A low-voltage DC MG, with a structure shown in Fig.~\ref{Fig3}, is modeled to study the effectiveness of the proposed control methodology. The practical validation of our control protocol and the DC MG model consists of four DC-DC
converters emulated on a Typhoon HIL 604 system which showed in Fig.~\ref{Fig4}, ensuring a high-fidelity replication of real-world scenarios.
Each source is driven by a buck converter. The converters have similar typologies but different ratings, i.e., the rated currents are equal to $I_{1,2,3,4}^{\operatorname{rated}}=(6,3,3,6)$, besides virtual impedance are equal to $R_{1,2,3,4}^{\operatorname{vir}}=(2,4,4,2)$. 
The converter parameters are $C=2.2 \mathrm{mF}, L=2.64 \mathrm{mH}, f_{s}=60 \mathrm{kHz}, R_{\text {line }}=0.1 \Omega$, $R_{L}=10 \Omega, V_{\text {ref }}=48 \mathrm{~V}$, and $V_{\mathrm{in}}=80 \mathrm{~V}$. 
The rated voltage of the DC MG is $48 V$. The communication network is shown in Fig.~\ref{Fig3}. Communication links are assumed bidirectional to feature a Laplacian matrix and help with the sparsity of the resulting communication graph. 
\begin{figure}[ht]
\centering
\includegraphics[width=0.4\textwidth]{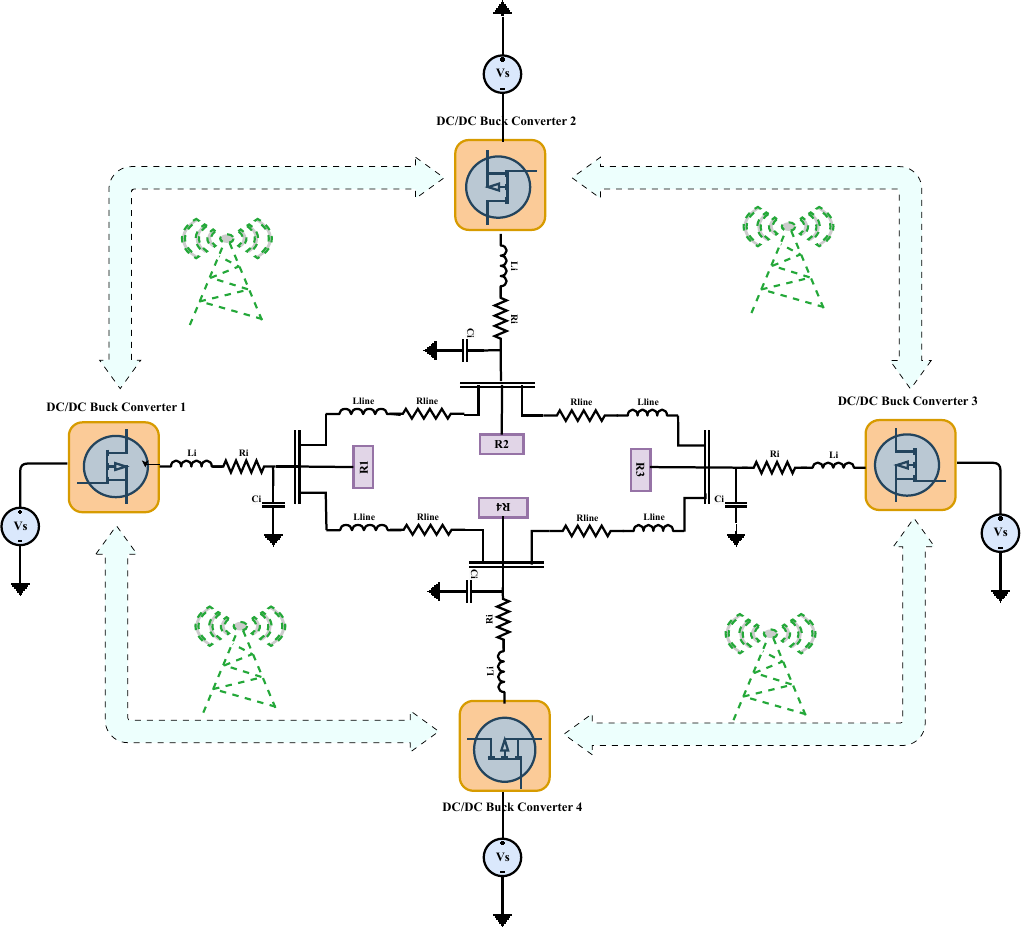}
\caption {The tested DC MG physical structure.}
\label{Fig3}
\end{figure}

In this section, several cases are designed to verify the effectiveness of the proposed controller.

\begin{figure}[ht]
\centering
\includegraphics[width=0.4\textwidth]{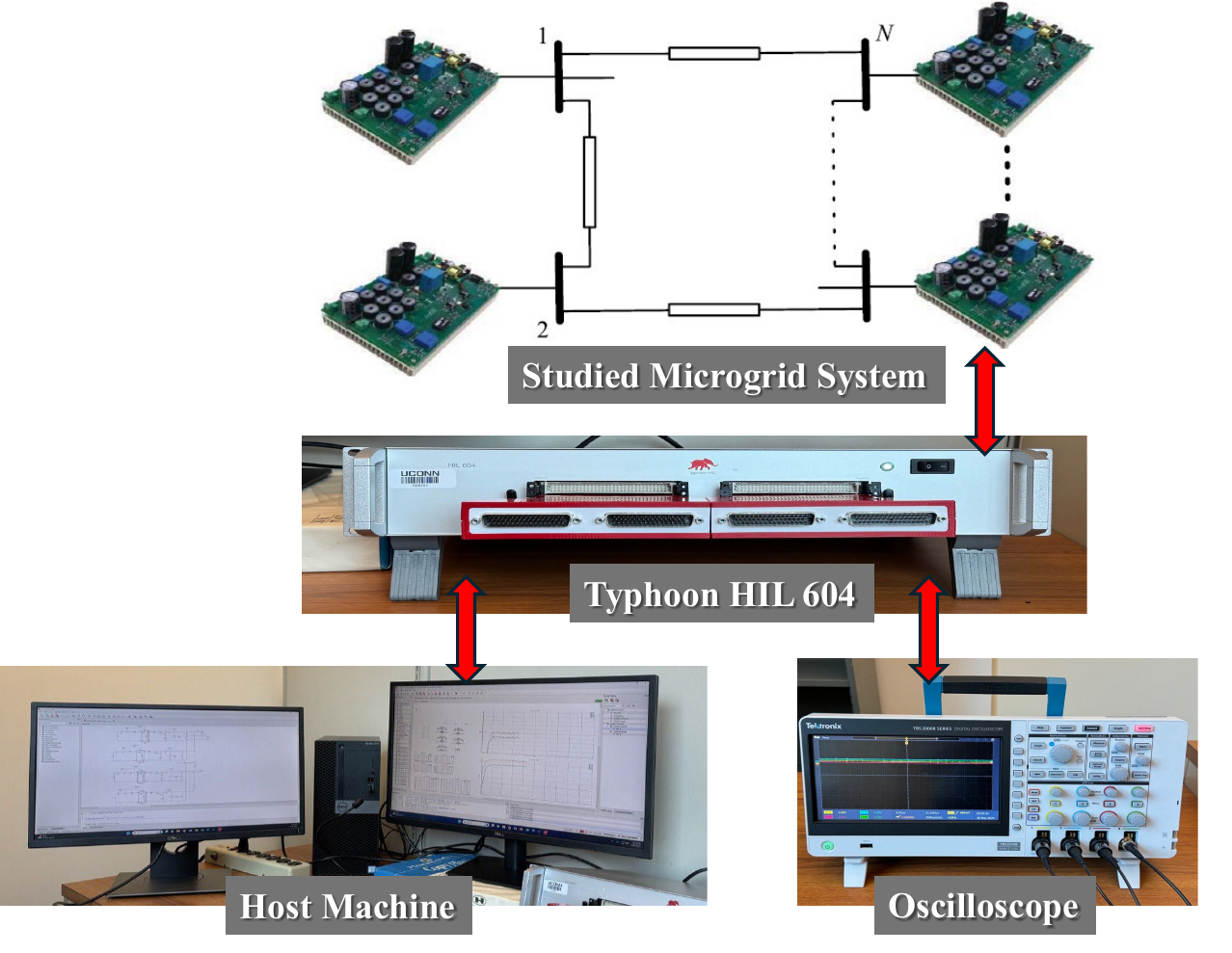}
\caption {The tested DC MG physical structure built using Typhoon HIL devices.}
\label{Fig4}
\end{figure}

\subsection{Voltage Regulation and Current Sharing Test}
In this scenario, the aim is to confirm the effectiveness of the suggested controller in attaining the control objective even when facing a EU-FDI attacks. To assess its performance, a comparison is made with the conventional secondary controller. 
The test lasts from $0$ to $20$ seconds. This part discusses the EU-FDI attack model, which involves injecting EU-FDI attacks at the local control input of each converter by selecting $\delta_i=\begin{bmatrix}3\mathrm{exp}\left(0.1t\right)&4\mathrm{exp}\left(0.2t\right)&0.5\mathrm{exp}\left(0.2t\right)&0.1\mathrm{exp}\left(0.3t\right)\end{bmatrix}^\mathrm{T}$, $\forall i=1,2,3,4 $.  
The adaptive tuning parameters for the resilient control method are set as  $\alpha_i = 1$, $\xi _i(0)=2$, $\hat\xi _i(0)=1.5$, $\forall i = 1, 2, 3, 4$. 
Initially, the conventional secondary controller is illustrated to be ineffective when subjected to an EU-FDI attacks on the MG system. Evidently as shown in Fig.~\ref{FIG5}, following the onset of the FDI attack at approximately $ t=4.8 s$, both bus voltage and current exhibit a continuous rise, indicating the incapacity of the traditional controller to fulfill control objectives in the presence of such attacks.

\begin{figure}[ht]
\centering
{\includegraphics[width=0.4\textwidth]{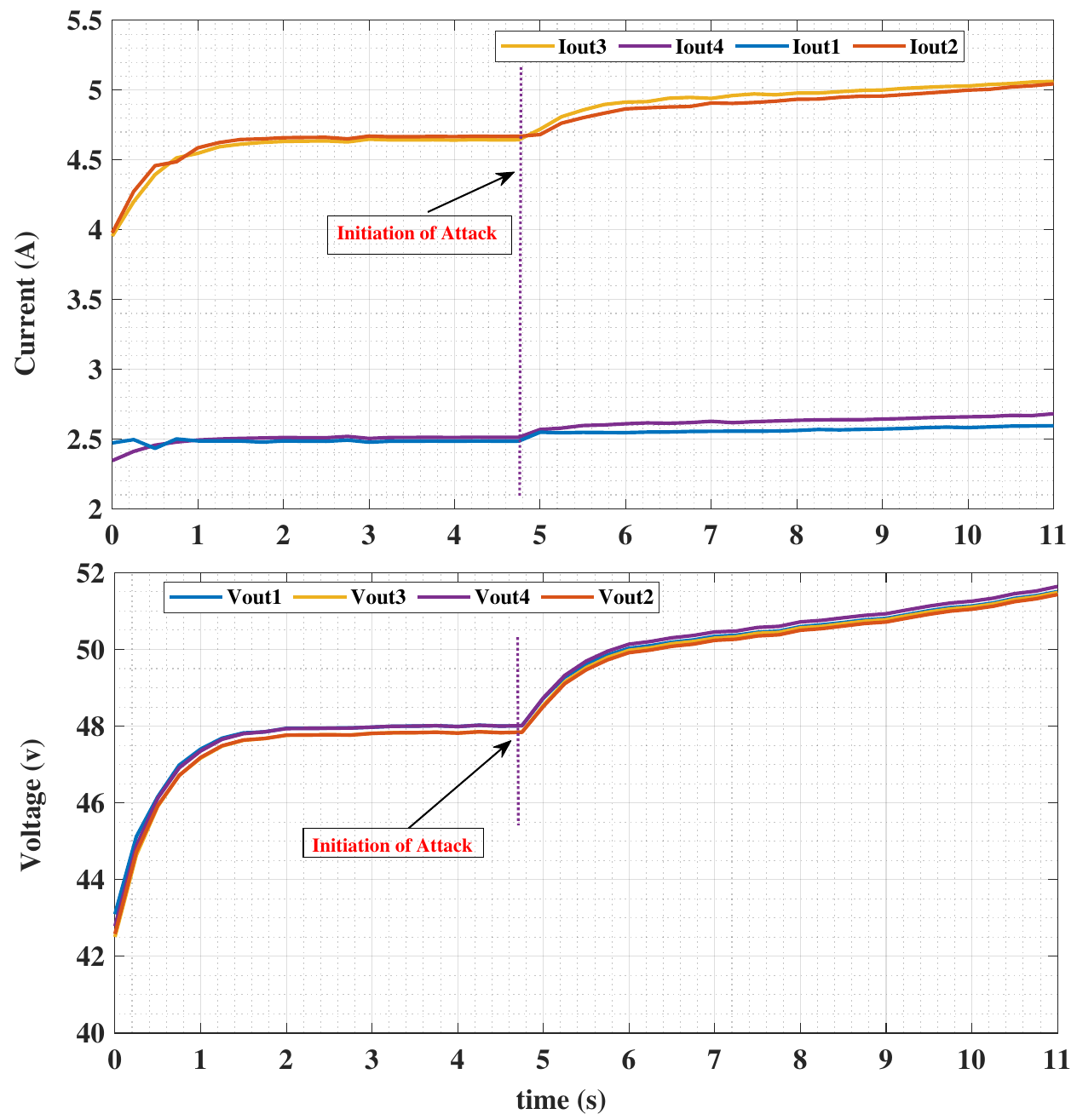}}
\caption {Performance of the conventional control approach in the case of unbounded attack: Supplied currents (top), Terminal voltages of Converters (bottom).}
\label{FIG5}
\end{figure}

\begin{figure}[ht]
\centering
{\includegraphics[width=0.4\textwidth]{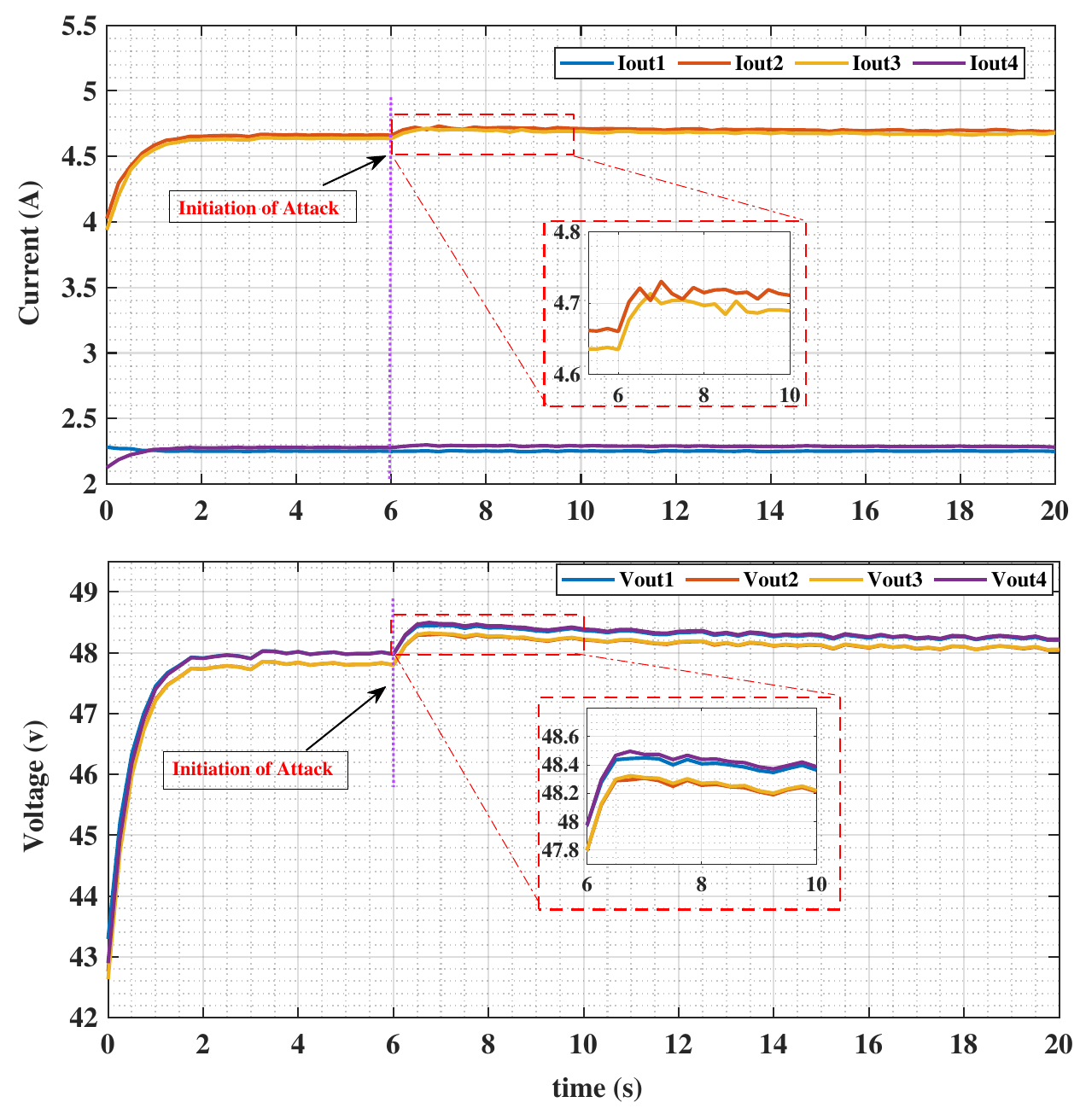}}
\caption {Performance of the proposed attack-resilient control approach in the case of unbounded attack $\delta_i=\left[3~\mathrm{exp}\left(0.1t\right),4~\mathrm{exp}\left(0.2t\right),0.5~\mathrm{exp}\left(0.2t\right),0.1~\mathrm{exp}\left(0.3t\right)\right]^\mathrm{T},\forall i=1,2,3,4 $: Supplied currents (top); Terminal voltages of Converters (bottom).}
\label{FIG6}
\end{figure}

Conversely, Fig.~\ref{FIG6} demonstrates that by using the proposed resilient control method, the terminal voltages of the converters stay bounded and remain within a small neighborhood of the desired value of $48 V$. Additionally this figure shows that the supplied currents are properly shared despite the EU-FDI attacks.
When an attacker injects an unbounded attack signal into the system, it forces the power sources to supply additional energy to the MG. This extraneous energy perturbs the system's equilibrium, leading to deviations in voltage and current from their original states.

Our attack-resilient protocol is designed to maintain system stability and ensure that the voltage and current remain within safe operational limits. However, due to the additional energy introduced by the attack, the system can only achieve a new steady state where the voltage and current are bounded but not necessarily identical to their pre-attack values.

In essence, the unbounded attack injects energy into the system, which alters the energy balance. While our protocol effectively mitigates the impact of the attack and prevents catastrophic failures, it cannot remove the excess energy introduced. This results in a new equilibrium state rather than a complete return to the original states of voltage and current.

To summarize, the reason the voltage and current cannot recover to their original states after an attack is due to the extra energy injected into the system by the attack signal. Our protocol ensures that the system remains stable and the variables are bounded, but the energy imbalance caused by the attack means that a return to the exact pre-attack conditions is not possible.

This case studies verify the effectiveness of the proposed resilient approach in solving mentioned issues, i.e. proportional load sharing and voltage regulation under EU-FDI attacks.

\subsection{Resilient Controller in 
\texorpdfstring{\cite{liu2023resilient}}{Liu 2023}
}
In this scenario, the performance of the secondary controller proposed in \cite{liu2023resilient} is tested under EU-FDI attacks. The experimental outcomes are depicted in Fig.~\ref{FIG70}, where all DGs experience unbounded attacks mentioned in the previous case study. Fig.~\ref{FIG70} illustrates that the terminal voltages of the converters fail to return to the reference value, and the current cannot be regulated to a predefined proportion. Consequently, the entire system experiences a collapse, rendering it unstable and incapable of continued operation. This observation leads to the conclusion that in the presence of an attacker injecting unbounded false data, the distributed algorithm proposed in \cite{liu2023resilient} exhibits complete failure.
In this study, \cite{liu2023resilient} was selected for comparison as it serves as a representative method capable of addressing bounded false data injection (FDI) attacks, where the first derivative of the attack signal is bounded. However, \cite{liu2023resilient} is not designed to handle Exponentially Unbounded Attacks (EU-FDI), making it an appropriate baseline to demonstrate the advancements of our proposed framework. This comparison highlights the limitations of existing approaches, such as \cite{liu2023resilient}, in dealing with more sophisticated and challenging attack scenarios. By addressing the specific challenges posed by exponentially growing attack signals, our framework represents a significant step forward in enhancing system stability and resilience. The superiority of our method is evidenced by the experimental results in Section V, which clearly illustrate its effectiveness in mitigating the impact of EU-FDI attacks, a problem that has not been addressed in prior studies.

\begin{figure}[ht]
\centering
{\includegraphics[width=0.4\textwidth]{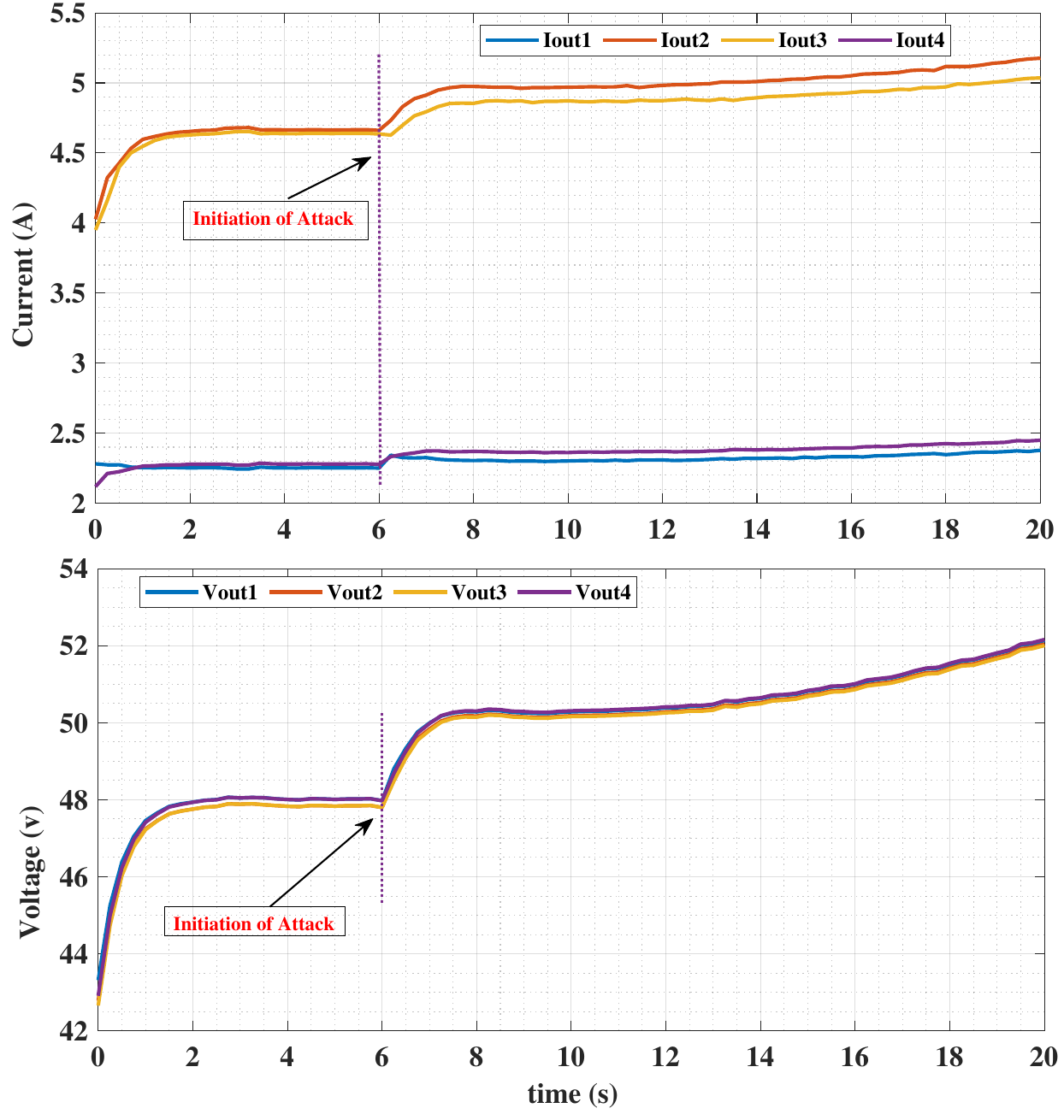}}
\caption {Performance of the proposed resilient control approach in \cite{liu2023resilient} in the case of unbounded attack: Supplied currents (top); Terminal voltages of Converters (bottom).}
\label{FIG70}
\end{figure}

\subsection{Robustness to Load Changes}
The second case study demonstrates the robustness of the proposed control strategy in response to variations in load conditions. The attacks are characterized by: to $\delta_i=\left[3~\mathrm{exp}\left(0.1t\right),4~\mathrm{exp}\left(0.2t\right),2~\mathrm{exp}\left(0.2t\right),~\mathrm{exp}\left(0.1t\right)\right]^\mathrm{T}$, $\forall i=1,2,3,4 $, and have been initiated at $t = 5.2 s$. The experiment involves altering the load resistance for converter 1 and 4 by introducing additional resistance equal to $90\Omega$ in parallel, at $t = 9 s$, followed by a halving of the resistance at $t = 14 s$.  Fig.~\ref{FIG7} illustrates the trajectories of voltage and current. Notably, as depicted in this figure, the objectives of proportional load
sharing and voltage regulation are successfully achieved, underscoring the resilience of the control strategy in accommodating uncertainties in load parameters.

\begin{figure}[ht]
\centering
{\includegraphics[width=0.4\textwidth]{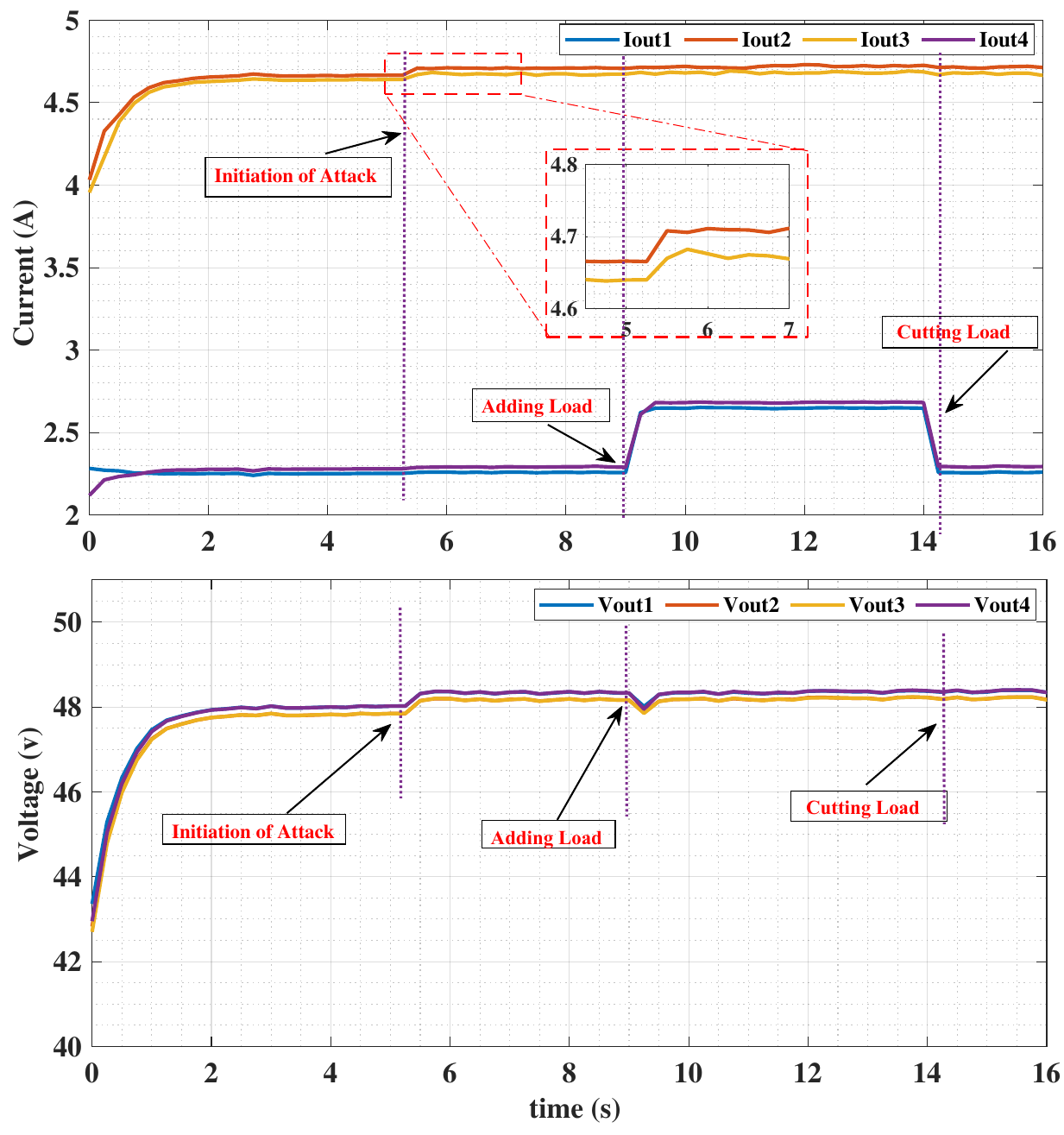}}
\caption {Performance of the proposed attack-resilient control approach in the case of unbounded attack and step load change: Supplied currents (top); Terminal voltages of Converters (bottom).}
\label{FIG7}
\end{figure}

\subsection{Communication Failure}
In this case, the influences of the communication link failure on the presented control algorithm are mainly discussed. At $t = 5.8 s$, the identical unbounded attacks mentioned in the preceding case study have been initiated, followed by the failure of communication link $2-3$ at $t = 11.2 s$. As depicted in Fig.~\ref{Fig23}, despite the failure of these links, the stability of frequency and voltage are maintained as the communication graph remains connected. The robustness of the proposed distributed secondary control scheme is evident in its ability to withstand communication link failure, ensuring system stability as long as the failure does not disrupt the graph's connectivity.
The experimental results in Section V, particularly in Fig.~\ref{FIG6} and Fig.~\ref{FIG7}, illustrate the dynamic process of secondary voltage and current adjustments. When subjected to varying loads and cyberattacks, the proposed controller recalibrates each converter's output in real-time, achieving stable operation and proportional load sharing. For instance, during the load change test depicted in Fig.~\ref{FIG7}, the system demonstrates resilience as it adjusts the output voltages and currents dynamically, ensuring continuous operation within the desired bounds despite the sudden introduction and removal of loads.

\begin{figure}[ht]
\centering
{\includegraphics[width=0.4\textwidth]{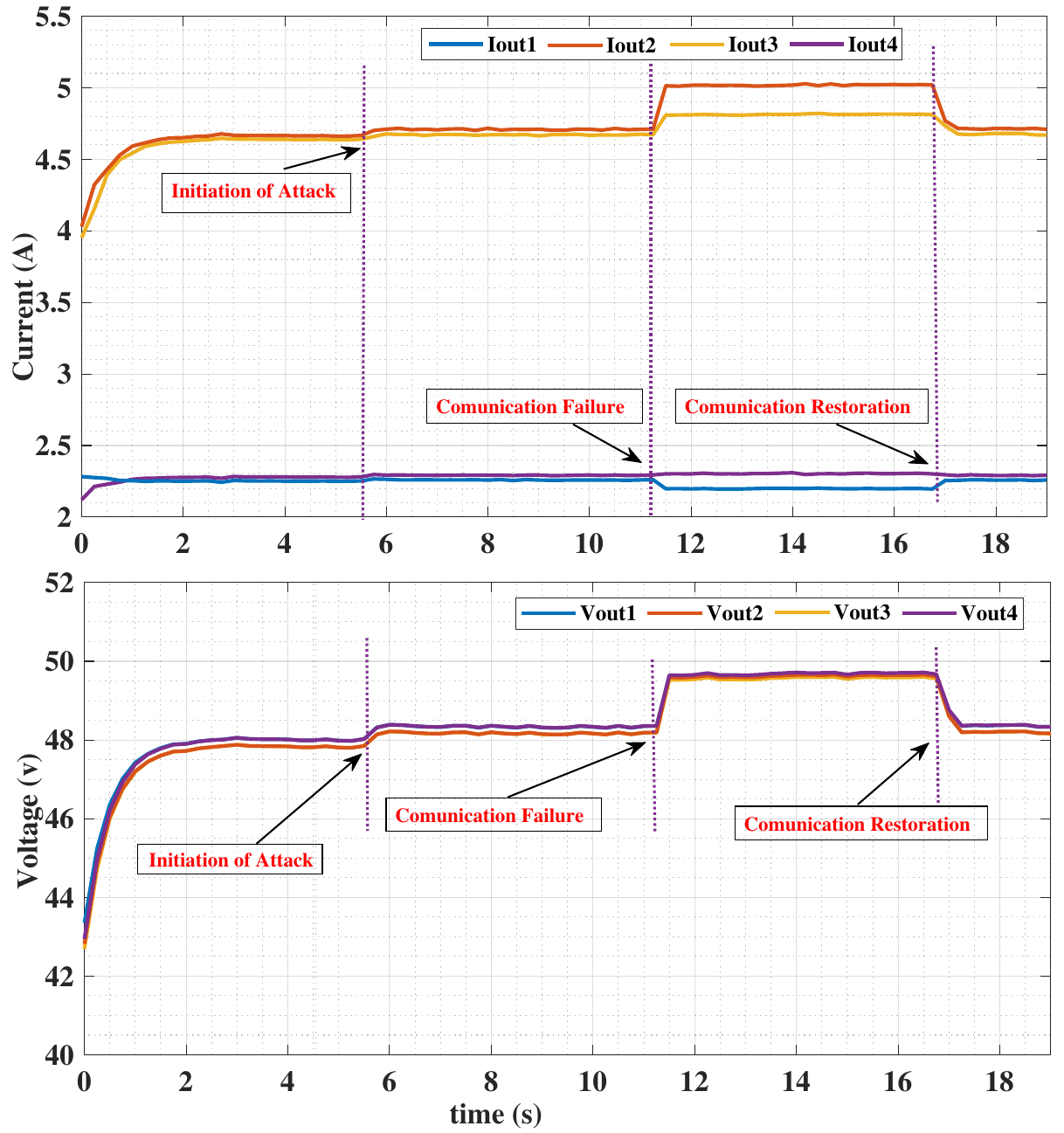}}
\caption { Performance of the proposed attack-resilient control approach in the case of unbounded attack and communication failure: Supplied currents (top); Terminal voltages of Converters (bottom).}
\label{Fig23}
\end{figure}

\section{Conclusion}
\label{sec:Conclusion}
This study has proposed a novel, fully distributed exponentially attack-reilient control framework for DC MG, tailored to enhance resilience against EU-FDI attacks, a significant challenge in the quantum era. This framework empowers DC MG to withstand EU-FDI attacks, which traditional attack-resilient control systems are not equipped to counter. The core of our approach is a consensus-based resilient secondary control implemented for each converter, specifically designed to combat EU-FDI attacks. A rigorous proof of Lyapunov stability analysis has verified that our strategy ensures UUB convergence under the EU-FDI attacks condition. The efficacy and enhanced resilience of our proposed control protocal have been further corroborated through comprehensive HIL experiments, demonstrating its practical applicability and resilience in mitigating the intricate challenges posed by quantum-era cyberattacks.
\ifCLASSOPTIONcaptionsoff
  \newpage
\fi
\bibliographystyle{IEEEtran}

\bibliography{main}

\end{document}